\AddToHook{begindocument/end}{%
  \makeatletter
  \def\@cite#1#2{[{#1\if@tempswa , #2\fi}]}%
  \makeatother
}
\PassOptionsToPackage{nocompress}{cite}
\documentclass[conference,compsoc]{IEEEtran}

\ifCLASSOPTIONcompsoc
    \usepackage{cite}
\else
  \usepackage{cite}
\fi

\ifCLASSINFOpdf
\else
\fi

\usepackage{xcolor} 
\usepackage{tcolorbox}  %
\usepackage{helvet}   %
\usepackage[T1]{fontenc}
\usepackage{amsmath}
\usepackage{amsfonts}
\usepackage{amsthm} %
\usepackage{algorithm}
\usepackage{algpseudocode}

\usepackage{hyperref} 
\usepackage{cleveref}

\usepackage{graphicx}
\usepackage{booktabs}
\usepackage{tikz}
\usepackage{nicefrac}

\usepackage{amssymb} %
\usepackage{pifont}  %

\newcommand{\TargetYes}{\tikz[baseline=-0.6ex]\fill (0,0) circle (3pt);} %
\newcommand{\TargetNo}{\tikz[baseline=-0.6ex]\draw (0,0) circle (3pt);}  %
\newcommand{\TargetBoth}{%
  \tikz[baseline=-0.6ex]{%
    \draw (0,0) circle (3pt);
    \fill[black] (0,0) ++(180:3pt) arc (180:0:3pt) -- cycle;
  }%
}

\newcommand{\Yes}{\checkmark}
\newcommand{\No}{\tikz[baseline=-0.6ex]{\node[scale=0.9]{\(\times\)};}} %

\newtheorem{definition}{Definition}

\newtheorem{theorem}{Theorem}
\newtheorem{lemma}{Lemma}
\newtheorem{assumption}{Assumption}
\newtheorem{corollary}{Corollary}

\newcommand{\vz}{\boldsymbol{z}}

\newcommand{\shortsection}[1]{\vspace*{1ex}\noindent{\bf #1.}}

\usepackage{booktabs}
\usepackage{textcomp}

\begin{document}

\makeatletter
\def\@cite#1#2{[{#1\if@tempswa , #2\fi}]}
\makeatother

\title{Beyond Indistinguishability: Measuring Extraction Risk in LLM APIs}

\author{
\IEEEauthorblockN{Ruixuan Liu}
\IEEEauthorblockA{Emory University\\
ruixuan.liu2@emory.edu}
\and
\IEEEauthorblockN{David Evans}
\IEEEauthorblockA{University of Virginia\\
evans@virginia.edu}
\and
\IEEEauthorblockN{Li Xiong}
\IEEEauthorblockA{Emory University\\
lxiong@emory.edu}
}

\maketitle

\begin{abstract}
Indistinguishability properties such as differential privacy bounds or low empirically measured membership inference are widely treated as proxies to show a model is sufficiently protected against broader memorization risks.
However, we show that indistinguishability properties are neither sufficient nor necessary for preventing data extraction in LLM APIs.
We formalize a privacy-game separation between extraction and indistinguishability-based privacy, showing that indistinguishability and inextractability are incomparable: upper-bounding distinguishability does not upper-bound extractability.
To address this gap, we introduce $(l,b)$-\emph{inextractability} as a definition that requires at least $2^{b}$ expected queries for any black-box adversary to induce the LLM API to emit a protected $l$-gram substring. 
We instantiate this via a worst-case extraction game and derive a rank-based extraction risk upper bound for targeted exact extraction, as well as extensions to cover untargeted and approximate extraction. 
The resulting estimator captures the extraction risk over multiple attack trials and prefix adaptations.
We show that it can provide a tight and efficient estimation for standard greedy extraction and an upper bound on the probabilistic extraction risk given any decoding configuration.
We empirically evaluate extractability across different models, clarifying its connection to distinguishability, demonstrating its advantage over existing extraction risk estimators, and providing actionable mitigation guidelines across model training, API access, and decoding configurations in LLM API deployment.
Our code is publicly available at: \url{https://github.com/Emory-AIMS/Inextractability}.

\end{abstract}

\IEEEpeerreviewmaketitle

\section{Introduction}
Large language models (LLMs) are increasingly consumed through hosted, black-box application programming interfaces (APIs)~\cite{openai_o1,deepseek_chat}, which are crucial for building agentic systems~\cite{wang2024survey} and domain applications~\cite{nori2023capabilities} at scale.
For practitioners, the API provides a ``prompt in, text out'' interface as shown in \Cref{fig:overview}.
While a few generation parameters can be selected by the user, the model's weights, training pipeline, and internal states remain opaque.
Even under this restricted access, some LLMs have been found to output text in their training data~\cite{nasr2023scalable,carlini2021extracting}, such as emails, code snippets, and long-form passages, which raises privacy, copyright infringement, and contextual integrity concerns.

Several methods have been proposed to measure data extraction risks in LLMs with varying extraction goals (targeted~\cite{carlini2022quantifying} or untargeted~\cite{carlini2021extracting}), adversary capabilities (discoverable~\cite{hayes2024measuring} or extractable~\cite{nasr2023scalable}), and evaluation criterion (exact match~\cite{carlini2022quantifying} or approximate match~\cite{ippolito2022preventing}).
The general concept of \emph{extractability} refers to the likelihood or the extent to which a model emits content from its training data.
The most representative  approach~\cite{carlini2022quantifying} identifies a segment as extractable when it can be greedily generated as a suffix by the target model when prompted with its prefix.

\begin{figure}
    \centering
    \includegraphics[width=0.9\linewidth]{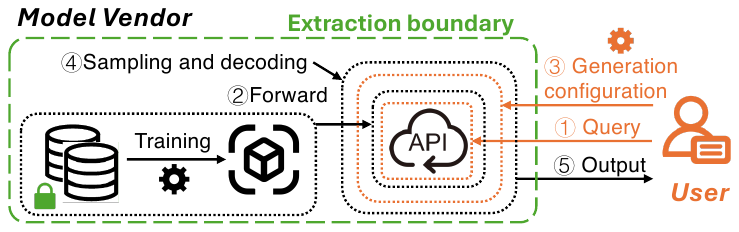}
    \caption{Overview of black-box LLM API.
    The adversary controls the prompt \textcircled{1} and some aspects of the generation \textcircled{3}. 
    Satisfying our $(l, b)$-inextractability definition requires that any black-box adversary needs at least $2^b$ expected API queries to emit a protected $l$-gram subsequence.
    }
    \label{fig:overview}
\end{figure}

Beyond heuristic defenses
including data deduplication~\cite{lee2021deduplicating}, post-training mitigation~\cite{chen2025parapo}, and output filtering~\cite{ippolito2022preventing}, differentially-private (DP) training~\cite{abadi2016deep} is often advocated as a defense because it provides a theoretical guarantee.
While satisfying the differential privacy definition provides an upper bound on the worst-case privacy risk in distinguishing neighboring training datasets, membership inference attacks (MIA)~\cite{shokri2017membership,hayes2025strong} can provide a lower bound on empirical privacy risk that is often used as a standard auditing tool~\cite{nasr2023tight,mahloujifar2024auditing}.
Both DP and MIA focus on \emph{distinguishability} which indicates the likelihood that the presence of one training record (which for language models is often modeled as a text fragment of some length) can be distinguished given the model.

A prevailing assumption underlying this use of DP and MIA is that there is a direct connection between \emph{distinguishability} and \emph{extractability}.
MIA has been used as a key component for data extraction~\cite{carlini2021extracting}, and MIA risk is believed to correlate with extraction risk~\cite{carlini2019secret}.
Several works~\cite{balle2022reconstructing,swanberg2025beyond} derive theoretical upper bounds on reconstruction success probability given a DP guarantee.
Given the reducibility of certain privacy games~\cite{salem2023sok}, mitigating the adversary advantage in the differential privacy distinguishability (DPD) game can also mitigate the advantage of other attacks, including data reconstruction (RC)~\cite{balle2022reconstructing}, membership inference (MI)~\cite{shokri2017membership}, and attribute inference (AI)~\cite{jayaraman2022attribute}.

However, this assumption may give a false sense of extraction risk.
Recent empirical results show that models can have weak MIA signals yet still regurgitate different forms of training data~\cite{liu2025language,ippolito2022preventing} or have high exposure risk~\cite{liu2025precurious}.
Thus, it remains an open question: what does \emph{indistinguishability} (as measured by a DP bound or low empirical MIA risk) imply for \emph{extractability} risk?

Our first goal is to clarify the relationship between \emph{indistinguishability} and \emph{inextractability}. 
Beyond previously noted contamination and ambiguous membership issues~\cite{duan2024membership}, we reveal that the extraction game is fundamentally separated from other inference privacy games.
While an inference adversary aims to distinguish whether a sample is in the training set (either from a data distribution prior or membership prior), an extraction adversary succeeds when the target model produces an explicit sequence which is not necessarily secret.
Therefore, the standard reducibility chain~\cite{salem2023sok} does not hold for data extraction: %
reducing \emph{distinguishability} risk does not necessarily ensure low \emph{extractability} risk.

We prove that indistinguishability and inextractability are neither sufficient nor necessary conditions for each other, and characterize the conditions under which the reducibility chain does not hold.
We further identify conditions under which the two risks appear to align, for example, when the signals used for MIAs are not calibrated by its sample hardness, clarifying what privacy measurements are informative for estimating extraction risk for an LLM API. %

To address the gap between indistinguishability and extraction risk, we introduce $(l,b)$-\emph{inextractability}, a definition which an LLM API satisfies if any black-box adversary requires at least $2^{b}$ expected queries to elicit any protected $l$-gram sequence. Our inextractability definition is inspired by and analogous to the differential privacy definition for indistinguishability.
While differential privacy bounds an adversary's capability to distinguish membership, ours bounds the capability of extraction.
We propose an \emph{extractability risk estimator} that accounts for multiple trials, prefixes, and decoding choices.
Instead of empirical generation or relying on specific sampling, we estimate the risk via the rank of the ground truth token's sampling probability.%

\shortsection{Contributions}
Compared to previous empirical extraction measurement methods~\cite{carlini2022quantifying,nasr2023scalable,hayes2024measuring,ippolito2022preventing,biderman2023emergent} that depend on specific generation configurations, our cost-based measurement provides an empirical upper bound on extraction risk given any generation configuration. It provides an efficient estimate for the standard greedy-decoding extraction~\cite{carlini2022quantifying} and upper bounds the recent probabilistic extraction measurement~\cite{hayes2024measuring}.
Beyond exact and targeted extraction, we extend the bound to untargeted and approximate extractions, enabling a unified extraction risk estimation. 
Our key contributions are:
\begin{enumerate}
	\item Formalizing a privacy-game separation between extraction and classic inference privacy games, showing that indistinguishability and inextractability are not reducible, and we characterize  the conditions under which they align (\Cref{sec:pre_unbounded}).
	\item Analyzing the limitations of existing extraction measurements (\Cref{sec:pre_underestimate}), and introducing a definition of $(l,b)$-inextractability  
    for LLM APIs (\Cref{sec:method_definition}) along with a worst-case upper-bound estimator based on rank probabilities (\Cref{sec:method_worst}).
    \item Demonstrating the generality of our measurement by providing an efficient estimator for greedy generation and by upper-bounding the recent probabilistic measurement (\Cref{sec:method_connection}). We further extend the proposed guarantee to untargeted and approximate extraction (\Cref{sec:method_beyond}).
	\item Empirically studying the relationship  between indistinguishability and inextractability, showing the extraction risk upper bound and the properties of our measurement with both pre-trained and fine-tuned LLMs (\Cref{sec:experiment}).
	\item Outlining and evaluating the effectiveness of mitigation strategies including training, API access control and decoding configuration (\Cref{sec:defense}).
\end{enumerate}

\section{Unbounded and Underestimated Risks}\label{sec:challenge}
This section provides background on LLM pipelines (Section \ref{sec:preliminary}) and our LLM APIs threat model (Section \ref{sec:pre_threatmodel}).
Then we show that 
extraction risks are misaligned with existing distinguishability risks (Section \ref{sec:pre_unbounded}) and standard extraction measurements may underestimate the risk due to their limited generation settings (Section \ref{sec:pre_underestimate}).

\subsection{Preliminaries}\label{sec:preliminary}
An LLM, $f_\theta$, is parameterized by weights $\theta$ and trained on a dataset $D = \{z^{(1)}, z^{(2)}, \dots, z^{(n)}\}$ where each $z^{(i)} = (z_1^{(i)}, z_2^{(i)}, \ldots, z_{L_i}^{(i)})$ is a tokenized sequence from vocabulary $\mathcal{V}$. 
At execution time, $f_\theta(v \mid z_{<t}^{(i)})$ denotes the logits assigned by $f_\theta$ to token $v \in \mathcal{V}$ given the preceding context $z_{<t}^{(i)}$.
\Cref{tab:notation} summarizes notations we use throughout the paper.

\begin{table}[tb]
\centering
\caption{Summary of notation.}\label{tab:notation}
{
\begin{tabular}{cp{2.5in}}
\toprule
\textbf{Notation} & \textbf{Description} \\
\midrule
$\mathcal{T}$ & Training algorithm \\
$f_\theta$ & Model parameterized by $\theta$ \\
$\phi_{k, T}$ & Decoding mechanism with parameters $k$ and $T$ \\ 
$P_t(v)$ & Sampling probability for token $v\in\mathcal{V}$ at position $t$ \\
$\circ$ & Concatenation of text or composition of functions \\$\mathcal{O}_{\theta\circ \phi}$ & Model API oracle, including model $f_\theta$ and decoder $\phi$ \\
$m$ & Number of accessible top token probabilities \\
$\mathcal{A}$ & Extraction mechanism \\
$\mathcal{S}$ & Valid extraction candidate set \\
$D$ & Training dataset $\{z^{(1)}, z^{(2)}, \ldots, z^{(N)}\}$ \\
$D_\text{pro}$ & Protected dataset, assumed to be a subset of $D$ \\
$l$ & Length of sub-sequence \\
$b$ & Inextractability level; $2^b$ expected queries to oracle $\mathcal{O}$ \\
$n$ & Number of extraction attempts \\
$p$ & At-least-once extraction success rate over $n$ trials \\
$p_{\mathbf{z}}$ & Single-trial success probability for text $\mathbf{z}$\\
$p_{\mathbf{z}}^*$ & Maximum single-trial success probability for text $\mathbf{z}$ \\
$r_t$ & Rank over $\mathcal{V}$ of ground-truth token at position $t$ \\
$\mathcal{L}$ & Matching criterion for a successful extraction \\
$(\epsilon, \delta)$ & Privacy loss parameters for  differential privacy \\
\bottomrule
\end{tabular}
}
\end{table}

\shortsection{Training}
LLMs are trained in an auto-regressive manner to maximize the likelihood of each token \( z_t \) given its preceding context \( z_{<t} \) over all training sequences.

\shortsection{Inference} 
Given a query $\vz_p$, the model returns the response $\vz_s$ by forwarding the query to the model $f_\theta$ and processing its output via a decoding mechanism $\phi$.
We summarize a general pipeline in \Cref{alg:decode} based on truncation sampling~\cite{hewitt2022truncation}. The pipeline takes as input the logits $\ell = f_\theta(\cdot \mid z_{<t})$ and the decoding parameters, and outputs the next token. This pipeline covers the decoding mechanisms commonly used in LLM APIs.
For example, greedy decoding is described by $k=1$; top-$k$ selects $\mathcal{V}_\text{top}$ as tokens with top-$k$ $P_t(v)$; top-$p$ truncates by $\sum_{v\in \mathcal{V}_{top}}P_t(v)$.

\renewcommand{\algorithmicrequire}{\textbf{Inputs:}}
\renewcommand{\algorithmicensure}{\textbf{Output:}}
\begin{algorithm}[H]
\caption{General Pipeline of Truncation Sampling}
\label{alg:decode}
\begin{algorithmic}[1]
\Require Logits $\ell$, temperature $T$, selected $\mathcal{V}_\text{top}$ with size $k$.
\State $P_t(v) \gets \mathrm{Softmax}(\ell/T)_v$ \textcolor{gray}{// Temperature-adjusted}
\State $\tilde{P}_t(v) \gets P_t(v)$ if $v \in \mathcal{V}_{\text{top}}$, else $0$ \textcolor{gray}{// Truncated}
\State $\hat{P}_t(v) \gets \tilde{P}_t(v) / \sum_{u \in \mathcal{V}} \tilde{P}_t(u)$ \textcolor{gray}{// Re-normalized} \label{line:re-norm}
\Ensure A token sampled from $\hat{P}_t(v)$
\end{algorithmic}
\end{algorithm}

\begin{table}[tb]
\centering
\caption{%
Output accessibility and decoding control.
}\label{tab:llm_access_control}
{
\begin{tabular}{lcc}
\toprule
\multicolumn{1}{c}{\textbf{Model API}} & \textbf{Probabilities}
& \textbf{Decoding Control} \\
\midrule
OpenAI GPT-5~\cite{openai_api_gpt5}             & No                      & No \\
Anthropic Claude~\cite{anthropic_openai_sdk}      & No                      & Yes \\
Azure OpenAI API~\cite{azure_openai} & Top-5 & Yes \\
OpenAI GPT-4~\cite{openai_chat_api}      & Top-20                  & Yes \\
Gemini~\cite{google_vertex_ai_generative}            & Top-20  & Yes \\
\midrule
Evaluated API  & Top-$m$ & Yes \\
\bottomrule
\end{tabular}
}
\end{table}

\subsection{Threat Model}\label{sec:pre_threatmodel}

We focus on a practical threat model for production LLMs where only an inference API is available, as illustrated in \Cref{fig:overview}.
The model vendor trains the model with training method $\mathcal{T}$ using internal dataset $D$ and releases the inference API oracle, $\mathcal{O}_{f_\theta\circ\phi_{k, T}}$.

\shortsection{Adversary Capability}
\Cref{tab:llm_access_control} summarizes two key aspects of access available through existing LLM APIs.
Instead of previous full-probability~\cite{shi2023detecting} or text-only access~\cite{nasr2023scalable},
we consider a general and realistic setting where the model vendor provides output probabilities for the top-$m$ tokens.
Typically the API in common deployments~\cite{openai_chat_api} limits the number output probabilities return to a maximum of $m=20$. 
We can cover the range of assumptions from $m=0$ (only receiving the predicted text) to $m=|\mathcal{V}|$ (full access to all logits).
In addition, some production APIs provide users with control over decoding hyperparameters such as the temperature $T$ and the threshold of top-$k$/top-$p$ for determining the selected token set $\mathcal{V}_\text{top}$ in \Cref{alg:decode}.
For risk measurement, we consider the general setting where the top-$m$ tokens with sampling probabilities are available and auditors can control the decoding parameters in $\phi$.

\shortsection{Adversary Goals}
The adversary's goal is to extract text fragments in the protected subset of the training data $D_\text{pro}$. 
Each element $\vz\in D_\text{pro}$ is a sequence of tokens; within each sequence, any contiguous $l$-gram $\vz_{t:t+l}$ is considered a protected subsequence.
This concept covers copyrighted paragraph with larger $l$ and pre-defined PIIs with shorter $l$.
We provide a definition of data extraction in \Cref{def:extraction}, which covers various data extraction attacks considered in previous works~\cite{carlini2021extracting, carlini2022quantifying, nasr2023scalable}:

\begin{definition}[\textbf{Data Extraction}]\label{def:extraction}
Given the model oracle $\mathcal{O}$ and a defender-specified extraction set $\mathcal{S}$ representing the protected content under a given matching criterion, 
\textbf{data extraction} occurs when (1) 
a text is generated by the oracle with some input as $\hat{\vz}_s=\mathcal{O}(\vz_p)$, and (2) the generated text is in the extraction set, $\hat{\vz}_s\in \mathcal{S}$.
\end{definition}

For \emph{exact} data extraction~\cite{carlini2021extracting}, the extraction set includes exact protected training data, $\mathcal{S} \subseteq D_\mathrm{pro}$; for \emph{approximate} data extraction~\cite{ippolito2022preventing}, the extraction set, $\mathcal{S} \subseteq \tilde{D}_\mathrm{pro}$, includes texts that approximately match the training data.
Specifically, 
given the matching criterion $\mathcal{L}$, the exact match is $\mathcal{L}(\hat{z}, z)=\mathbb{I}\{\hat{\vz}=z\}$, and the approximate match is $\mathcal{L}(\hat{\vz}, z)=\mathbb{I}\{d(\hat{\vz}, z)\leq c\}$ where $d(\cdot)$ is a distance metric (e.g., $1-$BLEU or $1-$ROUGE-L).

For \emph{targeted} data extraction~\cite{carlini2022quantifying}, the extraction set $\mathcal{S}=\{\vz_s\}$ for a target $\vz=\vz_p\circ\vz_s$ only contains the original suffix associated with its prefix and $\vz\in D_\mathrm{pro}$.
For \emph{untargeted} data extraction~\cite{carlini2021extracting}, $\mathcal{S}=\{\vz\,|\,\vz\in D_\mathrm{pro}, |\vz|=l\}$ includes all possible \emph{l}-gram sequences in the protected training dataset $D_\mathrm{pro}$.
We focus on exact and targeted extraction and assume $D_\text{pro}=D$ for simplicity.
We extend to approximate and untargeted data extraction in later sections.

\begin{table*}[t]
\centering
\caption{Privacy game comparison. \\ {\rm  The top four games are \emph{distinguishability} games, in contrast to data extraction. $\mathcal{A}^\prime$ denotes the secret choosing process in the worst-case attack.}}\label{tab:game}
\resizebox{\linewidth}{!}
{
\begin{tabular}{cccccc}
\toprule
{\bf Privacy Game} & {\bf Secret Construction} & {\bf Training} & {\bf Attack} & {\bf Posterior} ($p_1$) & {\bf Prior} ($p_0$) 
\\
\midrule
DP Distinguishability (DPD) & $D, b\in\{0, 1\}, \vz_0, \vz_1 \leftarrow \mathcal{A}^\prime (\mathcal{T}, n)$ & $\theta\leftarrow \mathcal{T}(D\cup \{\vz_b\})$ & $\hat{b}\leftarrow \mathcal{A}(\mathcal{T}, \theta, D, \vz_0, \vz_1)$ & $\Pr[\mathcal{A}:\hat{b}=b]$ & 1/2 
\\
Reconstruction (RC) & $D \sim \mathcal{D}^{N-1}, \vz\sim \pi$ & $\theta\leftarrow \mathcal{T}(D\cup \{\vz\})$ & $\hat{\vz} \leftarrow \mathcal{A}(\mathcal{T}, \theta, D)$ & $\Pr[\mathcal{A}:\hat{\vz}=\vz]$ & $\Pr[z, \hat{z} \sim \pi:\hat{\vz}=\vz]$ \\
Membership Inference (MI) & $D\sim \mathcal{D}^{N-1}, b\in\{0, 1\}, \vz_0, \vz_1 \sim \mathcal{D}$ & $\theta\leftarrow \mathcal{T}(D\cup \{\vz_b\})$ & $\hat{b}\leftarrow\mathcal{A}(\mathcal{T}, \mathcal{D}, \theta, n, \vz_0)$ & $\Pr[\mathcal{A}:\hat{b}=b]$ & 1/2 %
\\ %
Attribute Inference (AI) & $D\sim \mathcal{D}^N, b\in \{0, 1\}, \vz_0 \sim D, \vz_1 \sim \mathcal{D}$ & $\theta\leftarrow \mathcal{T}(D)$ & $\hat{z}_s \leftarrow \mathcal{A}(\mathcal{T}, \mathcal{D}, n, \theta, \vz_p)$ & $\Pr[\mathcal{A}: \hat{\vz}_s=\vz_s]$ & $\Pr[\mathcal{D}: \hat{\vz}_s=\vz_s]$ %
\\
\midrule
{\bf Data Extraction (DE)} & $D\sim \mathcal{D}^N, \vz=\vz_p\circ \vz_s \sim D$ & $\theta\leftarrow \mathcal{T}(D)$ & $\hat{\vz}_s\leftarrow \mathcal{A}(\mathcal{O}_{\theta\circ\phi}(\cdot), \vz_p)$ & $\Pr[\mathcal{A}:\hat{\vz}_s=\vz_s]$ & N/A %
\\ %
\bottomrule
\end{tabular}
}
\end{table*}

\subsection{Extraction is not Bounded by Inference Risk}\label{sec:pre_unbounded}

Other privacy notions and attacks are often considered to be useful proxies for extraction risk, but we will show that they have independent privacy games that are not comparable with extraction risk.

\shortsection{Reducibility of distinguishability privacy games}
Common privacy notions include differential privacy distinguishability (DPD), membership inference (MI), reconstruction (RC) and attribute inference (AI).
As summarized in \Cref{tab:game}, all of these focus on the \textit{distinguishability} of a given secret: 
(1) DPD and MI distinguish between two models trained with $\vz_0$ or $\vz_1$;
(2) AI distinguishes if the attribute $\hat{\vz}_s$ is predicted given the target model or drawn from the distribution;
(3) RC distinguishes if an output $\hat{\vz}$ is generated from the target model or a prior secret distribution $\pi$.

Salem et al.\ established a privacy reducibility chain~\cite{salem2023sok} among these four notions:
$$
{\text{DPD}\preceq \text{RC} \preceq \text{MI} \approx \text{AI}}
$$
where $G_1 \preceq G_2$ means the privacy game $G_1$ is reducible to $G_2$ and is at most as hard to win as $G_2$ with $\text{Adv}_{G_1} \geq c \cdot \text{Adv}_{G_2}$ given some constant $c \gtrsim 1$.
The attack advantage $\text{Adv}=(p_1-p_0)/(1-p_0)$ is the normalized gain on the success rate $p_1$ compared to the prior $p_0$.
Thus, a defense that can defend against $G_1$ by establishing $\text{Adv}_{\text{G}_1}<a$ is also effective against $G_2$ as $\text{Adv}_{\text{G}_2}<a/c$.
Hence, DP training is assumed to defend against other attacks, because DP ensures worst-case bounds for the DPD privacy game.
Recent theoretical analyses~\cite{swanberg2025beyond,balle2022reconstructing} establish the DP bound on reconstruction (RC) risk as in \Cref{theo:dp}:

\begin{theorem}[Reconstruction Bound from DP~\cite{swanberg2025beyond}]\label{theo:dp}
Let $\mathcal{T}(D)$ be a training algorithm that satisfies $(\epsilon,\delta)$-DP, with training dataset $D \sim \mathcal{D}^N$, output $\theta \subseteq \mathrm{Supp}(\mathcal{T})$.
Denoting a specific attack guess is $\hat{z}=\mathcal{A}(\theta)$, then for any output $\theta$ and any training sample $z\in D$ we have:
$$
\Pr_{\substack{z\in D}}[\mathcal{L}(\hat{z}, z)=1] \leq \frac{e^\epsilon}{e^\epsilon-1+\frac{1}{\Pr_{\substack{z\sim\mathcal{D}}}[\mathcal{L}(\hat{z}, z)=1]}} + \delta.
$$
\end{theorem}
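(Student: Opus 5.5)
We argue through the standard hypothesis-testing (or Bayesian posterior) view of $(\epsilon,\delta)$-DP applied to a \emph{fixed} attack guess. Fix the released model $\theta$ and the attacker's guess $\hat{z}=\mathcal{A}(\theta)$. Because $\hat{z}$ is a post-processing of $\theta$, it inherits the DP guarantee. We then study the event $E=\{\mathcal{L}(\hat{z},z)=1\}$ for a target drawn from the prior. Write $\kappa=\Pr_{z\sim\mathcal{D}}[\mathcal{L}(\hat{z},z)=1]$ for the baseline (prior) success probability, and treat the target $z$ as drawn from $\mathcal{D}$ and inserted into $D$. The goal is to show that the posterior probability of $E$, given the model output, is at most $\frac{e^\epsilon\kappa}{e^\epsilon\kappa+1-\kappa}+\delta$. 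Dividing numerator and denominator by $\kappa$ turns this into exactly the displayed expression $\frac{e^\epsilon}{e^\epsilon-1+1/\kappa}$.

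\textbf{Step 1: condition on matching versus non-matching targets.} Split the prior over $z$ into two parts: $\pi_1$ is $\mathcal{D}$ restricted to targets matching $\hat{z}$ (mass $\kappa$), and $\pi_0$ is $\mathcal{D}$ restricted to non-matching targets (mass $1-\kappa$). For any output set $O$, compare two datasets: $D\cup\{z\}$ with $z\sim\pi_1$, and $D\cup\{z'\}$ with $z'\sim\pi_0$. These differ in one record. DP therefore gives $\Pr[\theta\in O\mid z]\le e^\epsilon\Pr[\theta\in O\mid z']+\delta$ for every pair $(z,z')$. Averaging over both yields
\[
\Pr[\theta\in O\mid E]\le e^\epsilon\Pr[\theta\in O\mid \neg E]+\delta .
\]

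\textbf{Step 2: Bayes.} The posterior is
\[
\Pr[E\mid\theta\in O]=\frac{\kappa\Pr[O\mid E]}{\kappa\Pr[O\mid E]+(1-\kappa)\Pr[O\mid\neg E]}.
\]
This expression is increasing in the likelihood ratio $\Pr[O\mid E]/\Pr[O\mid\neg E]$. With $\delta=0$ the ratio is at most $e^\epsilon$, which gives the bound directly.

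\textbf{Step 3 (main obstacle): handling $\delta$ additively and pointwise in $\theta$.} The statement holds "for any output $\theta$", but pointwise statements fail for approximate DP. To deal with this, I would use the standard decomposition of an $(\epsilon,\delta)$-DP mechanism. The output distribution under $E$ is written as a mixture $(1-\delta)\mu+\delta\nu$, where $\mu\le e^\epsilon$ times the distribution under $\neg E$. This is the "$\delta$-approximate max-divergence" coupling lemma.

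On the $\mu$ component, Step 2 yields the pure-DP bound. The $\nu$ component contributes at most $\delta$ to the overall success probability. Averaging over $\theta$ then gives the stated inequality, with $\delta$ added outside. The pointwise phrasing is accordingly read in expectation over $\theta$, as in \cite{swanberg2025beyond}.

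Finally, since $\hat{z}$ was arbitrary, the bound holds for any attack. It is monotone in $\kappa$, so it reduces to the trivial prior when $\epsilon=0$.
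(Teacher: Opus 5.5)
The paper gives no proof of this theorem; it is imported from \cite{swanberg2025beyond}, so your argument has to stand on its own. Steps~1--2 are sound if $E$ is read per \emph{fixed} candidate guess $g$. Write $\pi$ for $\mathcal{D}$, $B_g=\{z:\mathcal{L}(g,z)=1\}$ and $\kappa_g=\pi(B_g)$. Then the split into $\pi_1,\pi_0$ does not depend on $\theta$, so the averaged DP inequality is legitimate. Taking $O=\{\theta:\mathcal{A}(\theta)=g\}$ and summing over $g$ then gives the pure-DP bound. $E$ cannot be the success event itself, because that event depends on $\theta$ and conditioning on it is not controlled by DP. Your switch to an in-expectation reading when $\delta>0$ is also necessary and correct.

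The gap is Step~3. Your mixture decomposition is of the law of $\theta$ given $z\in B_g$ versus $z\notin B_g$. It therefore produces a different excess measure $\nu^{(g)}$, of mass up to $\delta$, for each guess $g$, while success is the event $\bigcup_g\{z\in B_g,\ \mathcal{A}(\theta)=g\}$. The error term is $\sum_g\kappa_g\,\nu^{(g)}(\{\mathcal{A}(\theta)=g\})$, and your decomposition only bounds it by $\delta\sum_g\kappa_g=\delta\,\mathbb{E}_{z}\bigl|\{g:\mathcal{L}(g,z)=1\}\bigr|$. For exact matching this is at most $\delta$ and your argument closes. For a general criterion $\mathcal{L}$ it can far exceed $\delta$, for example approximate matching $d(\hat z,z)\le c$, where each target is matched by many guesses. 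The per-guess budgets cannot be pooled, so the claim that the $\nu$ component contributes at most $\delta$ overall is unsupported. Separately, the dominated part is really the sub-probability $\min(P_E,e^\epsilon P_{\neg E})$; a normalized $\mu$ with weight $1-\delta$ and $\mu\le e^\epsilon P_{\neg E}$ need not exist.

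The missing idea is to charge the slack to \emph{pairs} of targets, whose budget is shared across all guesses. Let $q(g\mid z)$ be the law of the guess given target $z$, and set $e_{z,z'}(g)=(q(g\mid z)-e^\epsilon q(g\mid z'))_+$. DP plus post-processing give $\sum_g e_{z,z'}(g)\le\delta$ for every pair. Average $q(g\mid z)\le e^\epsilon q(g\mid z')+e_{z,z'}(g)$ over $z\in B_g$, $z'\notin B_g$ with weights $\pi(z)\pi(z')$. This gives $(1-\kappa_g)a_g\le e^\epsilon\kappa_g b_g+s_g$, where $a_g$ and $b_g$ are the probabilities of guessing $g$ with a matching or non-matching target, and $\sum_g s_g\le\delta$. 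Solving for $a_g$ and summing over $g$ bounds the success probability by $\mathbb{E}_{\hat z}\bigl[\frac{e^\epsilon\kappa_{\hat z}}{1+(e^\epsilon-1)\kappa_{\hat z}}\bigr]+\delta$, which is the claim.
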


\noindent 
Essentially, DP bounds $p_1/p_0$, the ratio between the reconstruction posterior given the target model $f_\theta$ or a known prior distribution $\pi$. (A slightly looser bound of $e^\epsilon \cdot \Pr_{z\sim\mathcal{D}}[\mathcal{L}(\hat{z}, z)=1]+\delta$ is proven by Cohen et al.~\cite{cohen2024data}.)

\begin{figure}[bt]
  \includegraphics[trim=0 0 0 0, clip, width=0.9\linewidth]{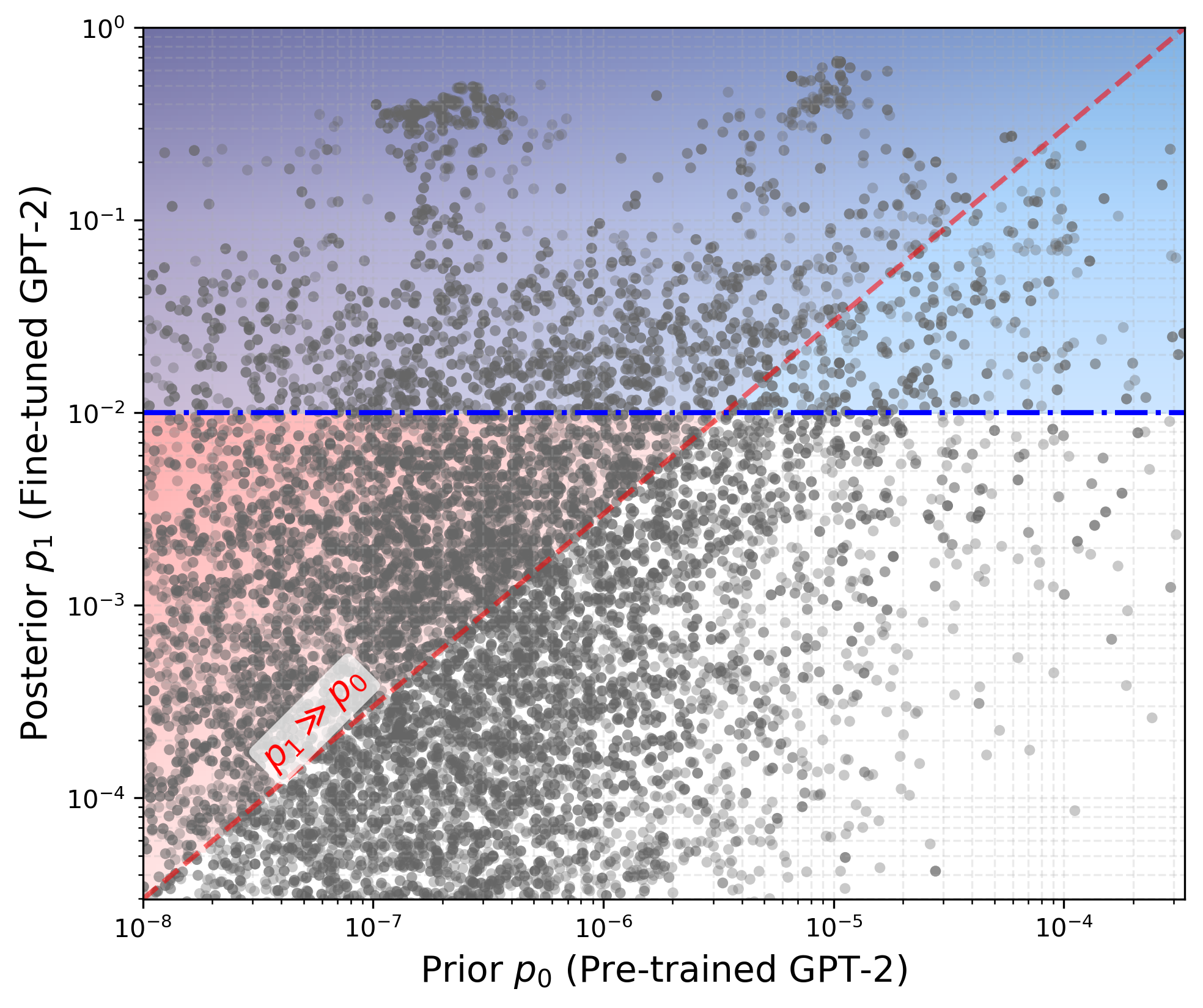}
  \caption{Partially independent risks of distinguishability (high posterior advantage over prior $p_1\gg p_0$, reddish area) and extractability (high posterior $p_1$, blue area). Each dot represents one short email fragment from the Enron dataset using the generation probability on fine-tuned GPT-2 as posterior and following Swanberg et al.~\cite{swanberg2025beyond} to compute the prior on GPT-2.
  Points in the blue triangle show that distinguishability is \textbf{unnecessary} for extractability---the extractability risk is high, but the reconstruction risk is not significant given the relatively small advantage $p_1/p_0$;
  points in the reddish area show that distinguishability is \textbf{insufficient} for extractability---the reconstruction advantage is significant, but the extraction risk is low.}\label{fig:dist_ext}
\end{figure}

\shortsection{Separated privacy game of extractability}
However, data extraction (DE) in \Cref{def:extraction} has non-trivial distinctions between above privacy games, even for reconstruction (RC) which has the similar goal to extract or recover training samples.
To highlight the distinctions between data extraction and reconstruction, we plot the probability of extracting a short fragment from a prior (GPT-2) or a model (posterior) in \Cref{fig:dist_ext}.
We note that reconstruction focuses the relative increase of the posterior probability ($p_1$) over the prior probability ($p_0$) (reddish area), similar to other distinguishability games and the ratio that is at the core of the differential privacy definition.
In contrast, data extraction focuses on the absolute posterior probability ($p_1$, shaded blue in \Cref{fig:dist_ext}).

Therefore, we need to separate the \textit{extractability} game with \textit{distinguishability} games.
The separation matters for two reasons:
1) The practical threat of verbatim generation is not limited to leaking private information, but also includes the risk that the model would output copyrighted material or violate contextual integrity. A protected content with a high prior (e.g., a famous copyrighted book) does not indicate that it can be freely emitted.
2) Bounding the advantage of the posterior, as in \Cref{theo:dp}, does not indicate concrete extraction risk level, especially when the prior cannot be accurately estimated as is typical for natural language.

\begin{figure*}
\centering
    \includegraphics[width=\linewidth]{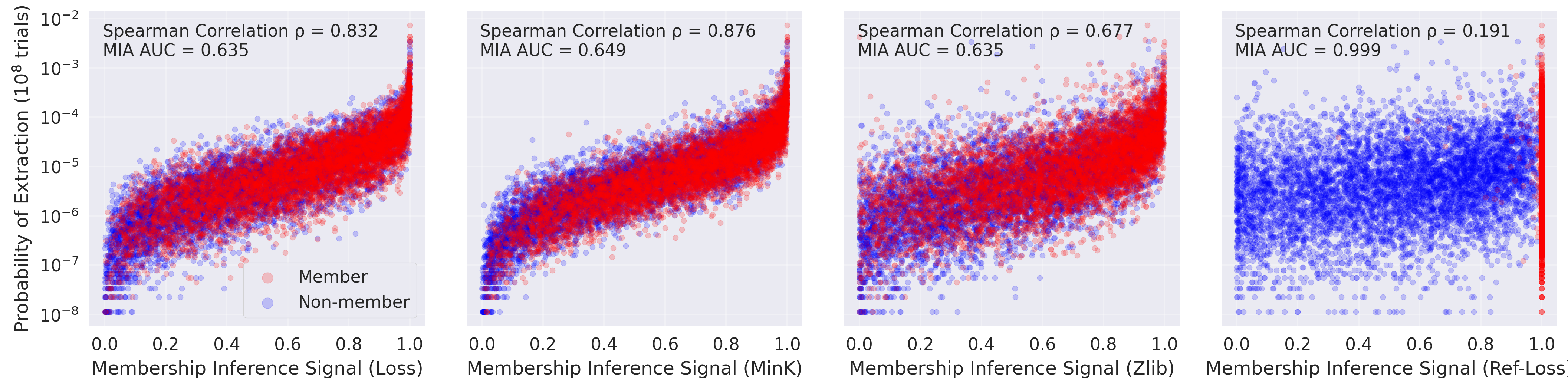}
	
\caption{
Observed correlation between distinguishability (as measured using four membership inference attacks) and extractability.}
\label{fig:mia_ext_sample}
\end{figure*}

\shortsection{Implications of separated privacy games}
Clarifying the differences between these distinct privacy games helps resolve common misunderstandings about extractability defenses and attacks and explains previous counter-intuitive observations in the literature.

\noindent\textit{1) A false sense of extractability risk.}
While DP training bounds distinguishability risk, it does not constrain extractability risk.
Even if a model exhibits low risk under distinguishability-based privacy games (including DPD, RC, MI and AI), its extractability risk may remain high.
For example, a DP-trained model that has negligible membership inference against a strong adversary can still output high posterior training data even with a strict privacy budget~\cite{liu2025precurious}. We formalize this in \Cref{theo:dpd_de}, showing that a low distinguishability risk does not necessarily indicate a low extractability risk when $p_0$ is large or $\varepsilon$ is small.

\begin{theorem}[$\text{DPD} \not\preceq \text{DE}$]\label{theo:dpd_de}
Resilience of DPD does not imply the resilience of data extraction.
\end{theorem}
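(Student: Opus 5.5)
We prove \Cref{theo:dpd_de} by counterexample: we exhibit a training algorithm $\mathcal{T}$ that is resilient in the DPD game, with any prescribed $(\epsilon,\delta)$ and even with $\epsilon\to 0$, $\delta=0$, and whose API nonetheless lets an adversary win the DE game of \Cref{tab:game} with probability close to one. The mechanism is the one in \Cref{fig:dist_ext}. DPD, and RC via \Cref{theo:dp}, control only the ratio $p_1/p_0$ between posterior and prior. DE has no prior term and counts the absolute posterior $p_1=\Pr[\mathcal{A}:\hat{\vz}_s=\vz_s]$. So it suffices to build a setting where $p_0$ is already large and a bounded ratio still permits a large $p_1$.

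\textbf{Step 1 (construction).} Fix a data distribution $\mathcal{D}$ that puts mass $\geq 1-\eta$ on one fixed sequence $\vz^\star=\vz_p\circ\vz_s$, for example a famous copyrighted passage that appears in essentially every corpus. Take $\mathcal{T}$ to be any $(\epsilon,\delta)$-DP algorithm whose output does not depend on $D$ at all, e.g.\ a fixed pretrained model $f_{\theta_0}$ that satisfies $f_{\theta_0}(z^\star_t\mid z^\star_{<t})$ greater than every competing logit for all suffix positions $t$. The constant map is $(0,0)$-DP, so DPD resilience holds trivially: every adversary has $\Pr[\hat b=b]=1/2$ and advantage $0$. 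If one prefers a nontrivial mechanism, the same argument works with any $(\epsilon,\delta)$-DP training run, for instance DP-SGD, started from $\theta_0$ with small enough noise-scaled updates that the argmax ordering on $\vz^\star$ is preserved.

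\textbf{Step 2 (extraction succeeds).} In the DE game, $D\sim\mathcal{D}^N$ and $\vz\sim D$, so $\vz=\vz^\star$ with probability at least $1-\eta$. The adversary queries $\mathcal{O}_{\theta\circ\phi}$ with $\vz_p$ under greedy decoding ($k=1$ in \Cref{alg:decode}). By construction it emits $\vz_s$ exactly, so $p_1\geq 1-\eta$. Taking $\eta\to0$ gives a DE success rate approaching $1$, while the DPD advantage is $0$, or at most $e^\epsilon-1+\delta$ in the nontrivial variant. Formally: for every $a>0$ there is a mechanism with $\text{Adv}_{\text{DPD}}<a$ whose DE success is not bounded by any function of $a$. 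This means no constant $c$ satisfies the reduction inequality required for $\text{DPD}\preceq\text{DE}$.

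\textbf{Step 3 (consistency with \Cref{theo:dp}).} To show the construction is not an artifact, we plug $p_0=\Pr_{z\sim\mathcal{D}}[\mathcal{L}(\hat z,z)=1]\approx 1$ into \Cref{theo:dp}. The bound $\frac{e^\epsilon}{e^\epsilon-1+1/p_0}+\delta$ tends to $1$, so DP allows $p_1\approx1$. This matches the theorem's remark that the failure occurs when $p_0$ is large or, relatedly, when a small $\epsilon$ only ties $p_1$ to a high prior.

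The main obstacle is the formal meaning of ``reducible''. DE has no prior, so its advantage $(p_1-p_0)/(1-p_0)$ is undefined and the reduction must be stated in terms of the raw success $p_1$. I would handle this by fixing the convention $\text{Adv}_{\text{DE}}=p_1$ (i.e.\ $p_0=0$), which matches the ``N/A'' prior entry in \Cref{tab:game}, and then showing the construction violates $\text{Adv}_{\text{DPD}}\geq c\cdot\text{Adv}_{\text{DE}}$ for every $c>0$.
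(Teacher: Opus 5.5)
Your proof is correct, but it takes a different route from the paper's. The paper builds no witness mechanism. It compares two DP-implied ceilings: $\text{Adv}_{\text{DPD}}\le \frac{e^\epsilon-1+2\delta}{e^\epsilon+1}$ on one side, and on the other the reconstruction bound of \Cref{theo:dp}, read as a bound on $p_1=\text{Adv}_{\text{DE}}$. It declares the reduction broken exactly when the second ceiling exceeds the first. With $\delta=0$ this gives $p_0>\frac{e^\epsilon-1}{3e^\epsilon-1}$. Hence $p_0>1/3$ suffices for every $\epsilon$, while $p_0<1/3$ needs $\epsilon<\ln\frac{1-p_0}{1-3p_0}$. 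The paper's route buys a quantitative map of the $(\epsilon,p_0)$ region where a DP guarantee certifies DPD more strongly than it certifies DE, and that map feeds directly into the converse \Cref{theo:de_dpd}. By itself, though, comparing ceilings does not exhibit a mechanism whose actual DE success is high; it implicitly relies on the bound of \Cref{theo:dp} being attainable.

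Your constant-map, near-point-mass construction supplies that existence step explicitly, and in the most extreme form: $\text{Adv}_{\text{DPD}}=0$ while $p_1\ge 1-\eta\to 1$. Your convention $\text{Adv}_{\text{DE}}=p_1$ matches the paper's. The price is that your witness is degenerate and recovers none of the threshold structure ($1/3$, or $\epsilon<1.94$). Conversely, your idea shows that failure of $\text{Adv}_{\text{DPD}}\ge c\cdot\text{Adv}_{\text{DE}}$ is not confined to the paper's regime. A data-independent model that greedily emits the most likely suffix for each prefix has zero DPD advantage but strictly positive DE success for every $\mathcal{D}$. The paper's threshold therefore describes when the DP certificates diverge, not when separation is possible.

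Two minor points, neither of which affects your main argument (it rests only on the $(0,0)$-DP constant map). First, the DP-SGD variant preserves the greedy argmax only with high probability, since Gaussian noise is unbounded, so it adds a small failure term. Second, the DPD bound $e^\epsilon-1+\delta$ you quote is valid but looser than $\frac{e^\epsilon-1+2\delta}{e^\epsilon+1}$.
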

\textit{Proof.} Let a training pipeline $\mathcal{T}$ satisfy $(\epsilon, \delta)$-DP, and thus be resilient against DPD as defined in \Cref{tab:game}.
The DPD advantage is bounded by the privacy budget as $\text{Adv}_\text{DPD}(\mathcal{A})\leq \frac{e^\epsilon-1+2\delta}{e^\epsilon+1}$~\cite{humphries2023investigating}.
Since extractability focuses on the posterior, $\text{Adv}_\text{DE}(\mathcal{A})=p_1 \leq \frac{e^\epsilon}{e^\epsilon-1+1/p_0}+\delta$ given  \Cref{theo:dp} with the prior $p_0=\Pr_{\substack{z\sim\mathcal{D}}}[\mathcal{L}(\hat{z}, z)=1]$.
Setting $\delta=0$, DPD cannot be reduced to DE when
\begin{align}
\frac{e^\epsilon - 1}{e^\epsilon + 1} &< 
\frac{e^\epsilon}{e^\epsilon - 1 + \frac{1}{p_0}}
\end{align}

\noindent
This inequality holds if and only if $p_0 > \frac{e^\epsilon - 1}{3e^\epsilon - 1}$. 
For \textbf{fixed $\epsilon$}, choose $p_0 > \frac{e^\epsilon - 1}{3e^\epsilon - 1}$; since this threshold approaches $\frac{1}{3}$ as $\epsilon \to \infty$, setting $p_0 > \frac{1}{3}$ suffices for all $\epsilon > 0$.
For \textbf{fixed $p_0$}, the inequality holds for all $\epsilon > 0$ when $p_0 > \frac{1}{3}$, while $p_0 < \frac{1}{3}$ requires $\epsilon < \ln\left(\frac{1-p_0}{1-3p_0}\right)< 1.94$.

\vspace{1ex}
\noindent\textit{2) A false sense of distinguishability risk.}
Similarly, we show in \Cref{theo:de_dpd} that a model that shows low extractability risk does not necessarily have low distinguishability risk.
A recent experimental work~\cite{hayes2025strong} supports this claim by showing that samples with low extraction risk are still distinguishable by membership inference attacks.

\begin{theorem}[$\text{DE} \not\preceq \text{DPD}$]\label{theo:de_dpd}
Resilience to DE does not imply a bound on distinguishability risk of DPD.
\end{theorem}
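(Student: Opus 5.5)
Since \Cref{theo:de_dpd} is a non-implication, I will prove it with a counterexample: a training pipeline and target sample for which the data extraction (DE) advantage is negligible but the DPD advantage is close to $1$. The argument mirrors that of \Cref{theo:dpd_de}, but the directions of the two games are reversed. The DE advantage is the absolute posterior $p_1=\Pr[\mathcal{A}:\hat{\vz}_s=\vz_s]$. The DPD advantage measures only the \emph{ratio} between the output distributions of the models trained with $\vz_0$ versus $\vz_1$. So it suffices to make the posterior tiny while making the ratio huge.

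\textbf{Construction.} Fix a target $\vz=\vz_p\circ\vz_s$ whose suffix has length $l$. Take a training algorithm $\mathcal{T}$ for which, given prefix $\vz_p$, the next-token distribution with $\vz$ in the training set is near uniform over $\mathcal{V}$, except that the ground-truth suffix tokens receive a small multiplicative boost $\alpha>1$. Write this as $P_t(z_t)=\alpha/(|\mathcal{V}|-1+\alpha)$. Without $\vz$ in the training set, the model is exactly uniform. Under any decoding $\phi_{k,T}$ and top-$m$ access, the probability that any single query emits $\vz_s$ is then at most roughly $(\alpha/|\mathcal{V}|)^l$. Temperature can sharpen the distribution, but if $\alpha$ is small the sharpened probability is still at most $(\alpha^{1/T}/|\mathcal{V}|)^{l}$ up to normalization. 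A cleaner option, which I will use, is to make the boost sit only in the logit ranks: the ground-truth token is ranked, for example, last among a near-tie. Then greedy or top-$k$ decoding with small $k$ never emits it, and sampling emits it with probability exponentially small in $l$. This gives $p_1\le \eta$ for any desired $\eta$, so the pipeline is resilient to DE in the sense of \Cref{tab:game}.

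\textbf{Distinguishability.} The DPD adversary of \Cref{tab:game} sees the parameters $\theta$ (or the full logits) together with $\vz_0,\vz_1$, not merely sampled text. It can therefore compare the likelihood of $\vz_0$ under $\theta$. That likelihood is $\alpha^l$ times larger when $\vz_0$ was trained on. By choosing $\vz_1$ to differ from $\vz_0$ on the suffix, the likelihood-ratio test thresholded at the midpoint succeeds with probability close to $1$ as $l$ grows, which gives $\text{Adv}_\text{DPD}\to 1$. In fact the test can be fully deterministic: simply check which of $\vz_0,\vz_1$ has the boosted logits. Equivalently, any $(\epsilon,\delta)$-DP guarantee for this $\mathcal{T}$ would need $\epsilon\ge l\ln\alpha$, which is unbounded. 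Resilience to DE therefore gives no bound on $\text{Adv}_\text{DPD}$, and in particular no $c\gtrsim1$ with $\text{Adv}_\text{DE}\ge c\cdot\text{Adv}_\text{DPD}$ can exist.

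\textbf{Main obstacle.} The delicate step is making sure the DE bound holds against \emph{every} black-box adversary and decoding choice, including low temperature and adaptive prefixes. To handle this, I will take the rank-based construction: the ground-truth token is never in the top set whenever the API restricts decoding to tokens above the ground-truth rank, and it has probability at most $1/|\mathcal{V}|$ otherwise. This bounds $p_1\le |\mathcal{V}|^{-l}$ per query uniformly. Meanwhile, the small logit gap still perfectly separates the two worlds for the white-box DPD adversary. I will also state the quantitative conclusion in the style of \Cref{theo:dpd_de}: DE resilience $p_1\le\eta$ coexists with $\text{Adv}_\text{DPD}\ge 1-\eta'$ for arbitrarily small $\eta,\eta'$.
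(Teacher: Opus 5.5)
Your demotion construction (ground-truth token ranked last), once repaired as described in the second paragraph below, proves the statement, but by a genuinely different route from the paper. The paper builds no example. It reuses the two DP-parameterized ceilings from the proof of \Cref{theo:dpd_de}, $\text{Adv}_\text{DPD}\le\frac{e^\epsilon-1+2\delta}{e^\epsilon+1}$ and $\text{Adv}_\text{DE}=p_1\le\frac{e^\epsilon}{e^\epsilon-1+1/p_0}+\delta$, and reads off when the DE ceiling falls below the DPD ceiling: prior $p_0\ll 1/3$, or, for fixed $p_0$, $\epsilon>\ln\frac{1-p_0}{1-3p_0}$. There the separation comes from DE scoring the absolute posterior while DPD scores only the posterior-to-prior ratio. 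Your separation has a different source. Training leaves a trace that the DPD adversary, who sees $\theta$, detects with certainty, yet the trace does not raise (in your version it lowers) the probability that the oracle emits $\vz_s$. Your per-position cap $1/|\mathcal{V}|$ is exactly \Cref{thm:worst_case_extraction_risk} with $r_t=|\mathcal{V}|$. What your route buys is a self-contained witness with $\text{Adv}_\text{DE}\le|\mathcal{V}|^{-l}$ and $\text{Adv}_\text{DPD}=1$. The paper's comparison of ceilings, by contrast, implicitly needs the DPD ceiling to be attained by some $(\epsilon,\delta)$-DP mechanism, which then obeys the smaller DE ceiling by \Cref{theo:dp}. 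What the paper's route buys is a quantitative condition for DP-trained models with finite $\epsilon$, tied to the prior. Your mechanism is non-private ($\epsilon=\infty$, where the paper's DE ceiling is vacuous) and deliberately artificial.

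Three repairs are needed in the write-up.

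\textbf{The boost version is false, not just less clean.} A token boosted by $\alpha>1$ over an otherwise uniform distribution is the argmax. Greedy decoding or $T\to0$ then emits $\vz_s$ with probability $1$; after normalization $\alpha^{1/T}$ dominates, so no $(\alpha^{1/T}/|\mathcal{V}|)^l$ bound holds. Your distinguishability paragraph and the remark $\epsilon\ge l\ln\alpha$ are written for that discarded version. Restate them for the demotion version: the trained-on suffix now has \emph{lower} likelihood, but a deterministic $\mathcal{T}$ still lets the DPD adversary win with probability $1$.

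\textbf{The DE bound needs the emission semantics of \Cref{def:extraction}.} If the DE adversary may post-process the returned top-$m$ probabilities, then for $m\ge|\mathcal{V}|-1$ the demoted token is exactly the one missing from the list. Then $l$ teacher-forced queries recover $\vz_s$, so your claim for ``any top-$m$ access'' fails. Emission is the reading the paper itself uses (its worst-case adversary already knows $\vz$), but you must state it.

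\textbf{Arbitrary prompts need exact ties.} All non-target logits, on and off training paths, should be exact ties with symmetric tie-breaking. With a strict near-tie order, a prompt that leaves the training path may reach contexts where a token of $\vz_s$ is the strict argmax, and it is then emitted deterministically at $k=1$.

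A witness that avoids all three issues: let $\mathcal{T}(D)$ output a fixed, data-independent model (e.g., uniform) plus a fingerprint of $D$ stored in parameters the oracle never reads. The oracle's behavior is then independent of $D$, while the DPD adversary reads the fingerprint from $\theta$ and wins with probability $1$.
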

\textit{Proof.} 
Similarly, 
given above upper bounds for $\text{Adv}_\text{DE}$ and $\text{Adv}_\text{DPD}$, there are two cases where the reducibility condition of $\text{DE} \preceq \text{DPD}$ does not hold: 1) the prior $p_0\ll 1/3$; and 2) fixing $p_0$, the privacy budget $\epsilon>\ln\left(\frac{1-p_0}{1-3p_0}\right)$.

\shortsection{Explaining apparently-inconsistent observations}\label{ssec:explaininginconsistencies}
It is commonly assumed that there is a direct connection~\cite{carlini2021extracting,carlini2019secret} between MIA and extraction, such as using sample-level MIA signal to identify data extraction. However, prior works have shown variation between membership inference and extraction risks.
For example, recent works~\cite{hayes2025strong,liu2025precurious} observe that distinguishability risk does not indicate extraction risk.

We find that both observations essentially align with our \Cref{theo:dpd_de} and \Cref{theo:de_dpd}, although they appear conflicting under certain conditions.
The main cause behind such confusion is the membership inference attack signals.
We show the interplay between membership inference (MI) risk and data extraction (DE) risk in \Cref{fig:mia_ext_sample} on a fine-tuned GPT-2 for four different MIA signals:
Loss~\cite{yeom2018privacy}, MinK~\cite{shi2023detecting}, Zlib~\cite{carlini2021extracting} and Ref-Loss~\cite{carlini2022membership}.
Different MIA signal are normalized to the same scale, and we compute the extraction probability using the method from Hayes et al.~\cite{hayes2024measuring}.
There is a clear trend that the correlation between MI and DE is stronger for non-calibrated signals (Loss, MinK) and weaker for calibrated signals (Zlib, Ref-Loss).

This sheds light on seemingly conflicting observations in prior work. The direct connection between MIA and DE is discussed using a loss-based MIA  signal~\cite{carlini2019secret}. This occurs because non-calibrated signals (Loss, MinK) reflect the model's absolute posterior confidence, which directly aligns with the extraction objective. In contrast, lower correlation~\cite{hayes2025strong} 
is observed with the calibrated  signals (Zlib, Ref-Loss). The calibration normalizes the confidence via a sample hardness factor such as the zlib entropy or reference-model perplexity, leading to the relative confidence improvement (posterior over prior) in generating a sample, which aligns with the MIA objective. Hence, MIA can appear correlated or uncorrelated with extraction risk, even though both originate from the same underlying model behavior.

\begin{table*}[t]
\centering
\setlength{\tabcolsep}{5pt}
\caption{Comparison of extraction or memorization evaluation methods. \\ {\rm 
Goal symbols: \protect\TargetYes\; targeted, \protect\TargetNo\; untargeted, \protect\TargetBoth\; both. 
Tier indicates the relative bound category (Upper or Lower) reported by the original works.
}}
\label{tab:extraction}
\renewcommand{\arraystretch}{1.12}
\resizebox{\textwidth}{!}{%
\begin{tabular}{cccccccccc}
\toprule
Method & Goal & Approx. & Multiple Trials & \multicolumn{1}{c}{Prefix} & 
\multicolumn{1}{c}{Decoding} & 
\multicolumn{1}{c}{Metric} & 
\multicolumn{1}{c}{Estimated Risk} \\
\midrule
\cite{carlini2021extracting}       & \TargetNo            & \No          & \Yes & No prefix / Internet snippets            & Beam-search + Temperature          & $k$-Eidetic                 & Lower bound \\
\cite{nasr2023scalable}            & \TargetNo            & \No           & \Yes & Internet snippets / Repeated tokens     & Limited by APIs         & $n$-gram match              & Lower bound \\
\cite{ippolito2022preventing}      & \TargetNo            & \No/\Yes        & \No  & Working                  & Greedy               & BLEU                 & Lower bound \\
\cite{hayes2024measuring}          & \TargetYes           & \No/\Yes        & \Yes & Original                 & Top-k + Temperature         & $(n,p)$-discoverable        & Upper bound \\
\cite{carlini2022quantifying}      & \TargetBoth          & \No          & \No  & Original                 & Greedy / Beam-search        & $k$-extractable             & Upper bound \\
\cite{biderman2023emergent}        & \TargetYes           & \No          & \No  & Original                 & Greedy               & Memorization score        & Upper bound \\
\cite{pappu2024measuring}          & \TargetYes           & \Yes           & \No  & Original                 & Greedy               & Counterfactual memorization & Upper bound \\
\cite{kassem2024alpaca}            & \TargetYes           & \Yes         & \No  & Optimized (Init. w/original prefix)        & Greedy               & ROUGE-L              & Upper bound \\
\cite{schwarzschild2024rethinking} & \TargetYes           & \No           & \No  & Optimized (Init. w/random prefix)       & Greedy               & Compression ratio             & Lower bound \\
\hline
Ours & \TargetBoth & \No / \Yes & \Yes & Empirically optimal (w/all original) & Optimal decoding & ($l, b$)-inextractability  & Upper bound \\
\bottomrule
\end{tabular}%
}
\end{table*}

\subsection{Suboptimal Attacks Underestimate Risk}\label{sec:pre_underestimate}
From the above discussion, only data extraction attacks directly audit extractability.
As summarized in~\Cref{tab:extraction}, previous extraction attacks either estimate a \textit{lower bound} (i.e., extractable)~\cite{nasr2023scalable} or \textit{upper bound} (i.e.,  discoverable)~\cite{hayes2024measuring} on extraction risk.
We focus on the worst-case risk (i.e., upper bound) because a single extraction can cause irreversible harm regardless of average-case behavior. Moreover, bounding the worst-case risk provides a rigorous and distribution-independent metric for evaluating and comparing defense mechanisms.
However, current upper-bound estimation methods are prone to underestimating the risk due to suboptimal attack strategies.

\shortsection{Extraction is a discoverable game}
Most data extraction attacks only conduct one time generation and estimate the extraction risk given the binary result of extraction or not.
However, the extractability is a \emph{discoverable game}~\cite{hayes2024measuring} where the chance of a successful extraction increases when attackers can query more times.
For example, given the extraction probability of one sample as $p_{\mathbf{z}}$, the probability that it can be extracted after $n$ independent trails is $1-(1-p_{\mathbf{z}})^n$.

\shortsection{Suboptimal prefix for discoverable extraction}
Text drawn from the internet or divergent prompts (e.g., ``poem poem poem'') can serve as prefixes to extract training data~\cite{nasr2023scalable}, but these methods primarily estimate the extraction risk (i.e., the lower bound) assuming the adversary has no access to the target text.
To assess the discoverable extraction risk (i.e., the upper bound) from a conservative defender's perspective, prior works~\cite{hayes2024measuring,carlini2022quantifying,biderman2023emergent,pappu2024measuring} assume that the complete training sample is known and the original prefix is the most effective trigger to generate its suffix. 
However, these works use a fixed length prefix, and do not consider extending the prefix adaptively. %
Other works~\cite{kassem2024alpaca,schwarzschild2024rethinking} use adversarially-optimized prefixes to improve the likelihood of generating a suffix, but the optimality is not guaranteed and the optimization requires extra trials.

\shortsection{Suboptimal decoding for discoverable extraction}
As summarized in~\Cref{tab:extraction}, most prior works use greedy decoding to generate outputs, effectively setting $|\mathcal{V}_\text{top}|=1$ in \Cref{alg:decode}. A few works~\cite{nasr2023scalable,hayes2024measuring} use specific decoding configurations with fixed temperature or top-$k$ settings.
Beam-search heuristically searches for the highest likelihood generated sequence~\cite{carlini2021extracting,carlini2022quantifying}, but the beam width is capped and performing beam-search requires extra queries.
Evidence in previous results~\cite{carlini2022quantifying,hayes2024measuring,yu2023bag} indicates that using non-greedy decoding sometimes slightly increases the extraction rate.
However, any fixed decoding hyperparameters are unlikely to be optimal, leaving an unknown gap between the empirical estimate and actual worst-case extraction risk. 

\section{Black-Box Extraction Risk}\label{sec:method}

To avoid misleading conclusions from distinguishability-based guarantees, 
we propose a direct definition of inextractability (\Cref{sec:method_definition}).
To avoid underestimating risk, our definition provides a formal upper bound on worst-case data extraction risk (\Cref{sec:method_worst}) via a Data Extraction game analogous to the DPD game for the differential privacy definition. Then we discuss connections to existing generation-based and probabilistic measurements (\Cref{sec:method_connection}). In \Cref{sec:method_beyond}, we extend to other extraction variants such as untargeted extraction and approximate matching criterion.

As summarized in \Cref{tab:extraction}, this section resolves the limitations of previous measurement methods 
noted in \Cref{sec:pre_underestimate} and demonstrates the generality of our method.

\subsection{Inextractability Definition}\label{sec:method_definition}
Inspired by the differential privacy definition, which bounds the worst-case probability that an adversary can distinguish between neighboring datasets, we define \emph{inextractability} by bounding the worst-case extraction risk of the most extractable subsequence in the protected dataset.

\shortsection{Cost-bounding extraction}
Unlike inference games, extraction is a discoverable game where more attack attempts increase the probability of success~\cite{hayes2024measuring}.
In the extreme case, any text may be extracted as the number of independent extraction attempts approaches infinity, $n\to \infty$. We therefore couple extraction success with query cost for meaningful security. 
Following conventions in cryptography, $b$-bit security means that any attacker that successfully breaks the cryptographic primitive must incur a cost $n$ and achieve a success probability $p$ such that $\nicefrac{n}{p} \geq 2^b$~\cite{micciancio2018bit}.
Thus, we naturally derive \Cref{def:ext_bit}, where the protected set $D_\text{pro}$ in practice can be text collected from a sensitive domain (e.g., clinical notes) or copyrighted source (e.g., books).

\begin{definition}[\textbf{Inextractability}]\label{def:ext_bit}
An oracle $\mathcal{O}$ to an LLM satisfies \textbf{$(l, b)$-inextractability} with respect to a protected data set $D_\text{pro} \subseteq D$ 
if the expected number of independent trials required to extract any subsequence $\vz \in D_\text{pro}$ of at least $l$ grams via $\mathcal{O}$ is at least $2^b$. \end{definition}

\noindent
Similarly to the definition of security level in cryptography, we quantify the inextractability as the expected effort needed by a worst-case attack for a successful extraction.
The cost-based definition indicates that extraction is theoretically possible but requires at least $2^b$ trials; thus a smaller $b$ suggests a higher extraction risk, while a larger $b$ makes extraction economically infeasible.
It naturally captures the risk under multiple trials and provides a decoding-independent upper bound on extraction risk.
While Hayes et al.'s definition of $(n,p)$-discoverable~\cite{hayes2024measuring} also captures the multi-trial risk, it relies on a fixed decoding strategy.

To construct the $n$-trial worst-case extraction game, we assume each trial has the optimal strategy and derive the overall maximum extraction success in \Cref{lemma:independent_optimal} (proof in Appendix~\ref{sec:appendix_proofs}). %
We show how to estimate the optimal probability $p_{\vz}^*$ in \Cref{sec:method_worst}.

\begin{lemma}[Optimality of Independent Strategy]\label{lemma:independent_optimal}
Let $p_{\vz}^*$ be the highest single-trial extraction probability for any $l$-gram suffix $\vz_s$ in $\vz$. For any (possibly adaptive) $n$-trial strategy $\mathcal{A}_{n}$ the probability of extracting $\vz_s$ in $n$ attempts is at most
\begin{align}
\Pr\Bigg[\bigcup_{j=1}^n \{\mathcal{A}_j(\mathcal{O}, \vz_p)=\vz_s\}\Bigg] \leq p \,=\, 1 - (1 - p_{\vz}^*)^n,
\end{align}
where the equality is achievable by performing $n$ independent repetitions of an optimal single-attempt strategy.
\end{lemma}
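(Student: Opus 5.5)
We bound the failure probability of an arbitrary adaptive $n$-trial strategy from below by a product of per-trial failure probabilities, and then show the bound is attained by the independent strategy.

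\textbf{Model of the interaction.} Model the adaptive strategy $\mathcal{A}_n$ as a sequence of (possibly randomized) decisions. At trial $j$, the adversary chooses a query (a prompt $\vz_p'$ together with decoding parameters $\phi$) as a function of the transcript $H_{j-1}$ of all previous queries and oracle responses. The key structural assumption, which we make explicit, is that the oracle $\mathcal{O}_{\theta\circ\phi}$ is \emph{stateless with fresh randomness}. Conditioned on the query issued at trial $j$, the response is drawn from the decoding distribution $\hat P$ of \Cref{alg:decode}, independently of $H_{j-1}$. Since $p_{\vz}^*$ is by definition the largest probability, over all admissible single queries, that the response equals $\vz_s$, we obtain
\[
\Pr[\mathcal{A}_j=\vz_s \mid H_{j-1}] \le p_{\vz}^*
\]
for every history $H_{j-1}$. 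This holds in particular for histories in which all earlier trials failed. Averaging over the adversary's internal randomness preserves the inequality.

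\textbf{Chaining the trials.} Let $F_j$ be the event that trials $1,\dots,j$ all fail. By the chain rule,
\[
\Pr[F_n]=\prod_{j=1}^n \Pr[\text{trial } j \text{ fails}\mid F_{j-1}].
\]
Each conditional probability is an average, over histories consistent with $F_{j-1}$, of quantities that are at least $1-p_{\vz}^*$. Hence $\Pr[F_n]\ge (1-p_{\vz}^*)^n$. Taking complements gives that the union of success events has probability at most $1-(1-p_{\vz}^*)^n$. Equivalently, one can argue by induction on $n$: condition on the first response, apply the $(n-1)$-trial bound to the continuation strategy, and use $1-(1-p^*)(1-p^*)^{n-1}$.

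\textbf{Achievability.} Fix a query $q^*$ attaining $p_{\vz}^*$, or a sequence of queries approaching it if the supremum is not attained. Issue $q^*$ $n$ times. By the stateless-oracle assumption the $n$ responses are i.i.d., so the failure probability is exactly $(1-p_{\vz}^*)^n$. The bound is therefore met with equality.

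\textbf{Main obstacle.} The main obstacle is justifying the per-trial conditional bound. It requires that $p_{\vz}^*$ is defined as a supremum over \emph{all} queries available to the adversary, including adapted prefixes and decoding configurations. It also requires that the oracle's randomness is independent across calls, with no caching, no seed reuse, and no state carried between calls. If responses could be correlated across calls, adaptivity could in principle exceed the bound. I would therefore state both properties explicitly as part of the black-box threat model of \Cref{sec:pre_threatmodel} before running the chain-rule argument.
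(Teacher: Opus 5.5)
Your proposal is correct and follows the same route as the paper's proof. Both bound each attempt's conditional success probability given the history by $p_{\vz}^*$, chain the per-trial failure probabilities to get $(1-p_{\vz}^*)^n$ as a lower bound on total failure, and attain equality by $n$ independent repetitions of an optimal single query. Yours is slightly more careful than the paper's, which writes the success probability as $1-\prod_j(1-q_j)$ with history-dependent $q_j$ and takes $q_j\le p_{\vz}^*$ for granted; you instead state the stateless, fresh-randomness oracle assumption explicitly and average over histories consistent with $F_{j-1}$.
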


\shortsection{Connection to min-entropy}
\Cref{lemma:independent_optimal} shows how extraction success grows with repeated trials, analogous with guessing a password with sufficient queries.
We relate this success growth to the attack cost underlying a min-entropy characterization, establishing a formal link between inextractability and information-theoretic security.
For an interface $\mathcal{O}$ satisfying $(l, b)$-inextractability, by definition, it requires at least $2^b$ expected attempts to extract any $l$-gram text. This translates to the constraint that $\forall \vz\in D_\text{pro}$:
\begin{align}
  2^b \leq \frac{n}{p} \quad \Rightarrow \quad b \leq \log_2 \left[ \frac{n}{1-(1-p_{\vz})^n} \right] \label{line:ieq_bits}
\end{align}

The right side of the inequality is a monotonically increasing function with respect to $n$. The attack success increases more slowly than the increase of cost, indicating that the most highest cost--success ratio is achieved by a single optimal trial. 
Therefore, taking the tightest bound over all $\vz\in D_\text{pro}$, we have:
\begin{align}
  b = \min_{\vz \in D_\text{pro}} (-\log_2 p_{\vz}) = -\log_2 \left(\max_{\vz \in D_\text{pro}} p_{\vz}\right)
\end{align}

This directly connects to the min-entropy of $D_\text{pro}$ under the optimal adversary strategy with $n$-trials $\mathcal{A}_n^*$:
\begin{align}
  H_{\infty}(D_\text{pro}|\mathcal{A}_n^*) = -\log_2 \left(\max_{\vz \in D_\text{pro}} \Pr[\mathcal{A}_n^*(\theta) = \vz]\right)
\end{align}

Intuitively, min-entropy measures how predictable the most extractable subsequence is. Analogous to evaluating the security level of a cryptographic system, \Cref{lemma:ext_entropy} establishes that bounding inextractability in~\Cref{def:ext_bit} is equivalent to ensuring a minimum entropy threshold:

\begin{lemma}[Equivalence of Inextractability and Min-Entropy]\label{lemma:ext_entropy}
If an interface $\mathcal{O}$ satisfies $(l, b)$-inextractability with respect to a protected set $D_\text{pro}$, it ensures the min-entropy $H_{\infty}(D_\text{pro}|\mathcal{A}) \geq b$ for any adversary $\mathcal{A}$.
\end{lemma}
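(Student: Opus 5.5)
We argue directly from \Cref{def:ext_bit} and \Cref{lemma:independent_optimal}, reducing the min-entropy statement to a bound on the single-trial optimal success probability $p_{\vz}^*$. Fix any adversary $\mathcal{A}$ (possibly adaptive, using any decoding configuration and prefix the oracle allows) and any protected $l$-gram $\vz\in D_\text{pro}$. Let $p_{\vz}=\Pr[\mathcal{A}(\mathcal{O})=\vz]$ be its success probability in one attempt. By the definition of $p_{\vz}^*$ as the highest achievable single-trial probability, $p_{\vz}\le p_{\vz}^*$. So it suffices to show $p_{\vz}^*\le 2^{-b}$ for every $\vz$. Then $\max_{\vz\in D_\text{pro}}\Pr[\mathcal{A}=\vz]\le 2^{-b}$, and taking $-\log_2$ gives $H_\infty(D_\text{pro}\mid\mathcal{A})\ge b$.

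To obtain $p_{\vz}^*\le 2^{-b}$, consider the attacker that repeats the optimal single-trial strategy independently until success. By \Cref{lemma:independent_optimal}, this strategy is optimal among $n$-trial strategies, with success curve $1-(1-p_{\vz}^*)^n$. The number of trials it needs is geometric, so its expected number of trials is exactly $1/p_{\vz}^*$. The paper's cost accounting gives the same quantity: the ratio $n/(1-(1-p_{\vz}^*)^n)$ from \eqref{line:ieq_bits} is increasing in $n$, so its minimum, attained at $n=1$, is $1/p_{\vz}^*$. By $(l,b)$-inextractability, every attacker needs at least $2^b$ expected trials to extract $\vz$. Applied to this attacker, that requirement gives $1/p_{\vz}^*\ge 2^b$.

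The main obstacle is the interpretation step: matching the "expected number of trials" in \Cref{def:ext_bit} to $1/p_{\vz}^*$ for adaptive adversaries. We must make sure an adaptive adversary cannot do better in expectation than the independent repetition. This follows from \Cref{lemma:independent_optimal}, which bounds the success probability within $n$ attempts by $1-(1-p_{\vz}^*)^n$ for every $n$. Hence the tail probability $\Pr[N>n]$ of any adversary's stopping time $N$ is at least $(1-p_{\vz}^*)^n$. Summing these tails, $\mathbb{E}[N]=\sum_{n\ge0}\Pr[N>n]\ge 1/p_{\vz}^*$. So the minimum expected cost over all adversaries equals exactly $1/p_{\vz}^*$, and the definition's requirement is precisely $1/p_{\vz}^*\ge 2^b$.

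Taking the worst case over $\vz\in D_\text{pro}$ then yields $b\le -\log_2\max_{\vz}p_{\vz}^*\le H_\infty(D_\text{pro}\mid\mathcal{A})$ for every $\mathcal{A}$. This is the desired statement, and the bound is tight for the optimal adversary $\mathcal{A}_n^*$.
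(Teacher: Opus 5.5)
Your proof is correct and follows the paper's own reasoning. Both arguments read $(l,b)$-inextractability as the cost/success ratio $n/(1-(1-p_{\vz}^*)^n)\ge 2^b$, note that this ratio is minimized at $n=1$ to get $p_{\vz}^*\le 2^{-b}$, and take the maximum over $D_\text{pro}$ to conclude $H_\infty(D_\text{pro}\mid\mathcal{A})\ge b$ for single-attempt adversaries. Your geometric-stopping-time and tail-sum argument is a useful rigorization of the ``expected number of trials'' reading, but this direction only needs the definition applied to the independent-repetition attacker; the adaptive bound is what supplies the converse that the lemma's title suggests.
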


\shortsection{Range of extraction cost}
To interpret the extraction risk at level 
$b$ in practice, we characterize its range from cases where an attacker can guess the secret in a single trial to cases where extracting a suffix from the model is no easier than sampling from a na\"{\i}ve baseline distribution.
When the model strongly memorizes a secret $\vz$, e.g., each ground truth token of it has the highest sampling probability, a greedy decoding results in the minimum extraction cost, $b=0$.

The maximum value of $b$ represents the min-entropy of $D_\text{pro}$ without the model access oracle, $H_\infty(D_\text{pro})$.
The gap $H_\infty(D_\text{pro})-b$ represents the information leakage~\cite{smith2009foundations} enabled by access to the target model.
Given  vocabulary size $|\mathcal{V}|$ and length $l$, the whole secret space is $|\mathcal{V}|^l$ and $H_\infty(D_\text{pro})\leq l\log_2(|\mathcal{V}|)$, where the equation holds with a uniform distribution over all possible sub-sequences.

Since the natural language space is highly non-uniform, a tighter estimation on $H_\infty(D_\text{pro})$ requires an assumption on the prior distribution of $D_\text{pro}$.
An unconditioned prior~\cite{swanberg2025beyond} assumes a token distribution over the vocabulary, e.g., by Zipf's law, and independently samples each token from this distribution.
A conditioned prior requires a general model trained without the protected data. %

We emphasize that these prior distributions rely on strong assumptions, and the obtained min-entropy only acts as a reference to select a cost budget $b$ that reflects the maximum reasonable attacker investment. 
In general, we have the valid range of $0\leq b\leq l\log_2(|\mathcal{V}|)$.

\begin{figure*}[htb]
\centering
\includegraphics[width=1.0\linewidth]{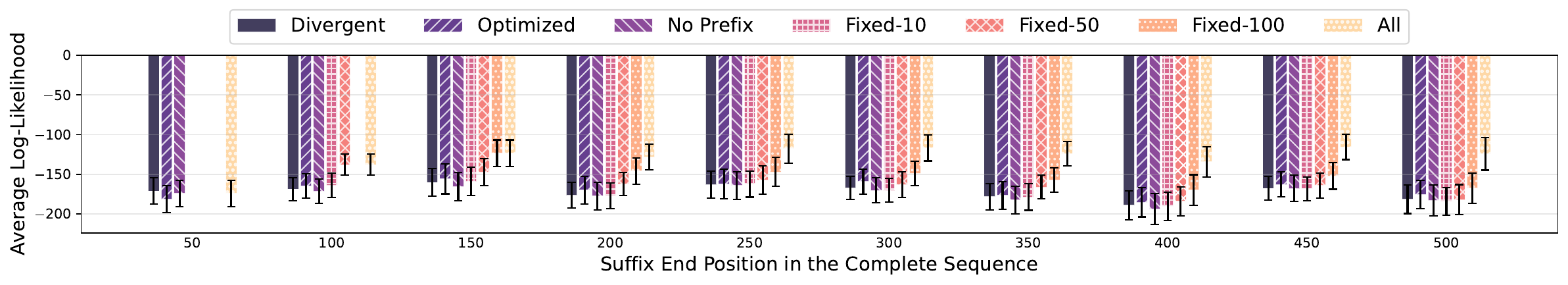}
\caption{Comparison of extraction prefixes. The log-likelihood is averaged over 200 extractions from the Enron dataset (suffix length $l=50$) using a fine-tuned GPT-2 with temperature $T=1$. Higher log-likelihood indicates a better prefix for extraction.
\textbf{All} (ours): uses all preceding tokens as the prefix for predicting the suffix starting at each position. 
\textbf{Divergent}~\cite{nasr2023scalable}: employs the divergent attack with their repeating-token prompt. 
\textbf{Optimized}~\cite{kassem2024alpaca}: initializes with the original 100-token prefix and iteratively optimizes it for 72 steps with a balanced objective that maximizes the longest common subsequence (LCS) while reducing overlap. 
\textbf{No Prefix}: uses the empty start-of-sequence token as prefix.
\textbf{Fixed-}$\mathbf{{|\vz_p|}}$: uses a fixed $|\vz_p|$-gram original prefix.
}
\label{fig:prefix}
\end{figure*}

\begin{figure}[htb]
    \centering
    \includegraphics[width=0.9\linewidth]{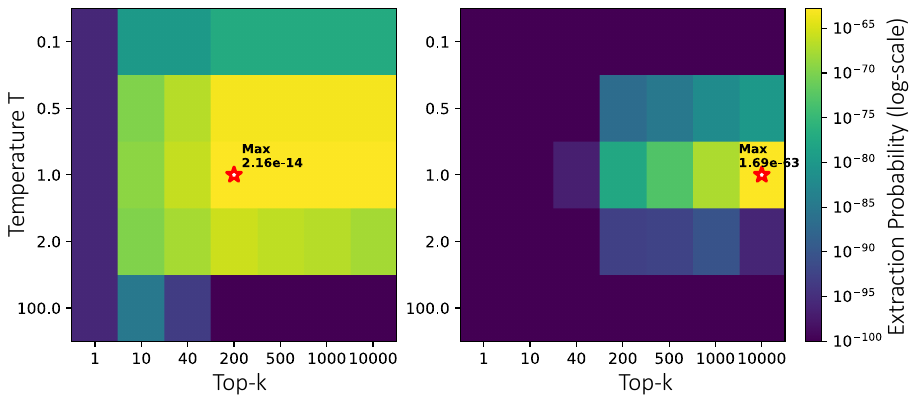}
    \caption{Extraction probability of fine-tuned GPT-2 with $l=50$, showing two training samples have different preferences on the decoding parameters, especially for the configuration such as top-$k$ that determines the truncation level.
    $T=1$ is the best as it aligns with the training objective.}
    \label{fig:subopt_decoding_kt}
\end{figure}

\begin{figure}[htb]
    \centering
    \includegraphics[width=0.95\linewidth]{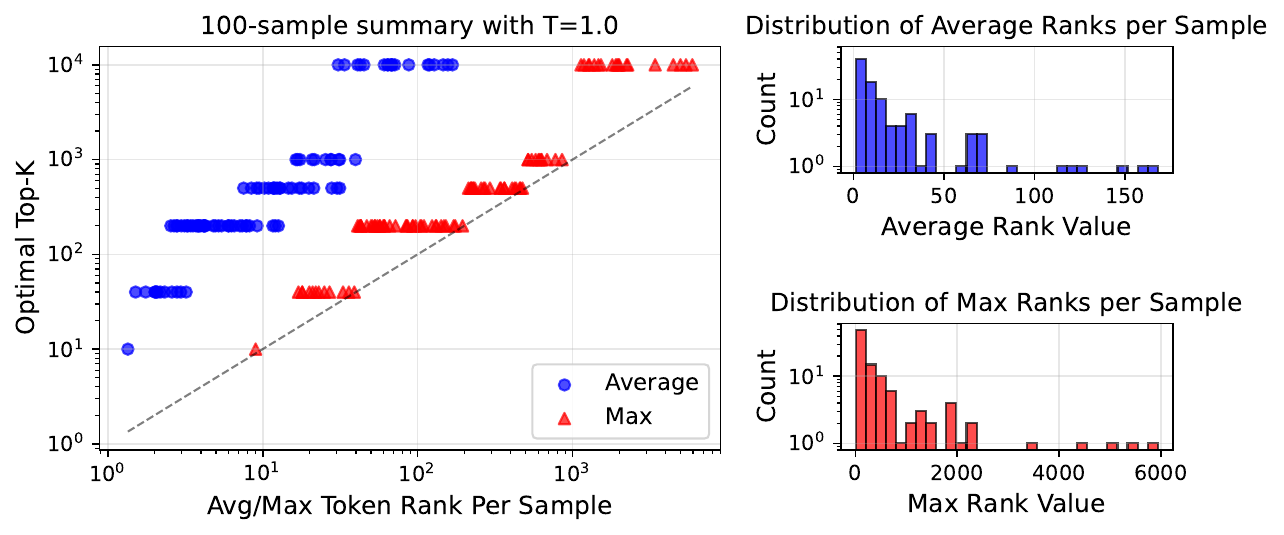}
    \caption{
    The optimal decoding parameter $k$ directly aligns with the sample’s maximum token rank. The large per-sample variations make fixed decoding configurations sub-optimal for extraction risk.
    }
    \label{fig:subopt_decoding_rank}
\end{figure}

\subsection{Worst-Case API Extraction Game} %
\label{sec:method_worst}
According to \Cref{lemma:independent_optimal}, to upper bound the worst-case extraction risk over $n$ trials, we need to construct the optimal attack that obtains an upper bound on $p_{\vz}$ for each trial.

\shortsection{Threat model for worst-case extraction}\label{sec:threat_model}
Following existing discoverable extraction measurements~\cite{hayes2024measuring,kassem2024alpaca}, we assume that the adversary has access to the complete text $\vz$.
The goal is to extract the suffix $\hat{\vz}_s=\vz_s$ from the model API oracle as $\hat{z}_s\leftarrow \mathcal{A}(\mathcal{O}_{f_\theta\circ\phi^*}, \vz_p^*)$, where the optimal decoding mechanism and prefix $\phi^*, \vz_p^*\leftarrow\mathcal{A}^\prime(\theta, \vz)$ are the output of another adversary procedure given the access to the model and the text.
We consider the setting in \Cref{tab:llm_access_control} where $\mathcal{O}$ releases the probabilities for the top-$m$ candidate tokens and $\phi$ has customizable decoding parameters $k$ and $T$.

\shortsection{Optimal prefix for verbatim extraction}
As discussed in \Cref{sec:pre_underestimate}, the standard fixed-length prefix is sub-optimal. To understand how much potential improvement is possible with an optimized prefix, we conducted experiments summarized in 
\Cref{fig:prefix}.
Even though the optimization consumes hundreds of times more queries to construct the optimized prefix, we find that it barely outperforms the original prefix with $|\vz_p| = 50$.
Across all of our experiments, we find that using all of the preceding tokens as the prefix provides the highest likelihood of extracting the suffix, although for most of our tests the increase over the 100-token fixed length prefix is small. An adversary with fixed resources would be more likely to succeed by using more queries with a 100-token fixed length prefix, than by incurring the additional cost of finding an optimal prefix or using a longer prefix.

\shortsection{Optimal decoding}
As illustrated in \Cref{fig:subopt_decoding_kt} and \Cref{fig:subopt_decoding_rank}, the decoding strategy substantially shifts the extraction probability $p_z$, and its effect is empirically non-monotonic across samples. 
A configuration that helps one extraction (e.g., smaller truncation to concentrate mass) may hurt another (e.g., excluding a mid-rank token).
Hence, using a fixed decoding systematically underestimates the worst-case risk.

Inspired by \Cref{fig:subopt_decoding_kt} and \Cref{fig:subopt_decoding_rank}, the optimal truncation size $k$ strictly aligns with the maximum rank of the ground truth token of $\vz_s$, which varies across samples and positions.
We therefore consider a stronger adaptive adversary who, for each extraction attempt at each position, optimizes truncation size $k$ and temperature $T$ to maximize the one-time success probability $p_z$. 
A realistic adversary would not be able to do this, of course, since they don't know the rank of the correct token, but our goal here is to simulate a worst-case adversary.
This strategy balances the two trade-offs: (1) smaller $k$ and lower $T$ concentrate probability mass on top-ranked tokens (exploitation); (2) larger $k$ or higher $T$ spread mass to lower-ranked ones (exploration). 

We obtain a worst-case upper bound for $p_z$ based on the rank-aware decoding optimization strategy in \Cref{thm:worst_case_extraction_risk}, with the formal proof in Appendix~\ref{app:worst_case_extraction_risk}.
The key idea is to choose $k$ equal to the token rank $r_t$ so that each ground-truth token is retained, and set $T$ sufficiently large to make the surviving tokens nearly uniform in probability. 
This yields an upper bound of $\nicefrac{1}{r_t}$ for each position. 

Aggregating across positions, the success probability of a single optimal attempt cannot exceed $\prod_t r_t^{-1}$, while $n$ independent attempts compound success via $1 - (1-\prod_t r_t^{-1})^n$. 
This gives a conservative, parameter-agnostic bound: tokens with smaller ranks $r_t$ are exponentially easier to extract, so increasing their rank sharply reduces overall extractability.
The bound treats each attempt independently and per-position ranks as fixed (teacher-forcing), so adaptive querying only compounds success through repetition.

\begin{theorem}[\textbf{Extraction Risk Upper Bound}]
\label{thm:worst_case_extraction_risk}
Given an LLM oracle $\mathcal{O}_{f_\theta\circ\phi_{k, T}}$ where user can control the decoding mechanism $\phi$ with parameters $k$ and $T$, and a training sample $\vz$, let $r_t$ denote the rank of the ground truth token $z_t$ in descending order of the probability distribution, $P_t(v)$ at position $t$ across every token $v\in\mathcal{V}$. 
The worst-case discoverable extraction risk over any truncation size $k$ and temperature $T$ is bounded by:
\begin{align*}
\max_{k, T} \Pr\Bigg[\bigcup_{j=1}^{n} \big\{ 
\mathcal{A}_j(\mathcal{O}_{f_\theta\circ\phi_{k, T}}, \vz_p) = \vz_s 
\big\}\Bigg] 
= 1 - \Big(1 - \prod_{t=1}^{l} \frac{1}{r_t}\Big)^n.
\end{align*}
\end{theorem}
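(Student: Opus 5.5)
The argument has three parts: first reduce the $n$-trial statement to a single trial, then bound the single-trial success probability position by position, and finally show that the bound is attained.

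By \Cref{lemma:independent_optimal}, the $n$-trial success probability of any (possibly adaptive) strategy is at most $1-(1-p_{\vz}^*)^n$. Equality holds when the strategy consists of $n$ independent repetitions of an optimal single trial. Since $x\mapsto 1-(1-x)^n$ is increasing on $[0,1]$, the theorem reduces to showing
\[
p_{\vz}^*=\max_{k,T}\Pr\big[\mathcal{A}(\mathcal{O}_{f_\theta\circ\phi_{k,T}},\vz_p)=\vz_s\big]=\prod_{t=1}^{l}\frac{1}{r_t}.
\]
Following the worst-case threat model, I will treat the ranks $r_t$ as fixed under teacher forcing. The prefix at step $t$ is $\vz_p\circ\vz_{s,<t}$, and $(k,T)$ may be chosen separately at each position. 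Because of the autoregressive factorization, the single-trial success probability is $\prod_t \hat P_t(z_t)$. The parameters at position $t$ affect only the $t$-th factor, so it suffices to maximize each factor separately.

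\textbf{Upper bound per position.} Fix $t$ and write $r=r_t$. The ground-truth token has nonzero probability only if it survives truncation, which requires $k\ge r$. Temperature scaling $\ell\mapsto\ell/T$ with $T>0$ preserves the order of the logits, so the $r-1$ tokens ranked above $z_t$ also survive. Each of them has $P_t(v)\ge P_t(z_t)$ for every $T$. After renormalization,
\[
\hat P_t(z_t)=\frac{P_t(z_t)}{\sum_{u\in\mathcal{V}_{\text{top}}}P_t(u)}\le\frac{P_t(z_t)}{r\,P_t(z_t)}=\frac1r .
\]
Multiplying over positions gives $p_{\vz}^*\le\prod_t r_t^{-1}$.

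\textbf{Achievability.} Take $k=r_t$ and let $T\to\infty$. The softmax of $\ell/T$ then tends to uniform, so the renormalized distribution over the $r_t$ surviving tokens tends to uniform and $\hat P_t(z_t)\to 1/r_t$. Hence the supremum is exactly $\prod_t r_t^{-1}$, and the theorem's ``$=$'' should be read as a supremum, approached in the limit $T\to\infty$. Two special cases are exact without a limit. When $r_t=1$, greedy decoding ($k=1$) gives $\hat P_t(z_t)=1$ at any finite $T$. When the top-$r_t$ logits are all equal, every finite $T$ already gives $1/r_t$.

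\textbf{Main obstacle.} The hardest step is making the per-position decomposition and the ``$\max$'' rigorous. There are two issues. First, ties in probability make the rank ambiguous; I will fix a tie-breaking convention under which truncation to $k=r_t$ keeps $z_t$, and note that the inequality $P_t(v)\ge P_t(z_t)$ for higher-ranked $v$ still holds. Second, the maximum is attained only in the limit, so I will state the bound as a supremum and note that it is approached arbitrarily closely.
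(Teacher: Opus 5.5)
Your proposal is correct and follows the paper's proof: a per-position bound $\hat P_t(z_t)\le 1/r_t$ approached by taking $k=r_t$ and a temperature limit, multiplied over positions via the chain rule with $(k,T)$ chosen per position, then lifted to $n$ trials by independent repetition. Your order-preservation argument (the $r_t-1$ higher-ranked tokens survive truncation) makes rigorous the paper's ``uniform is the extremal case'' step. Your limit $T\to\infty$ is also the right one: the appendix's ``$T\to 0$'' is a slip, since that limit would concentrate mass on the rank-1 token, and the main text itself says ``$T$ sufficiently large.''
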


\begin{algorithm}
\caption{Rank-Aware Estimation of Extraction Cost}\label{alg:one_shot}
\begin{algorithmic}[1]
\Require LLM $f_\theta$, available top logits size $m$, protected dataset $D_\text{pro}$, extraction length $l$
\Ensure \text{Inextractability level} $b$
\State Initialize the highest probability variable $p^* \leftarrow 0$
\For{each sequence $\vz \in D_\text{pro}$ of length $L$}
    \State Get ranks via teacher-forced pass $\{r_t\}_{t=1}^{L}\leftarrow f_\theta(\vz)$
    \For{$i = 1$ to $L - l + 1$}
        \State $p_{\vz} \leftarrow \prod_{t=i}^{i+l-1} p_t$, $p_t = \begin{cases} 1/r_t & r_t \le m \\ P_t(z_t) & \text{otherwise} \end{cases}$
        \If{$p_{\vz} > p^*$}
            \State $p^* \leftarrow p_{\vz}$
        \EndIf
    \EndFor
\EndFor
\State \Return $b \leftarrow -\log_2(p^*)$
\end{algorithmic}
\end{algorithm}

\shortsection{Rank-Aware Estimation Framework}
Given the rank-based extraction risk bound, we propose a rank-aware estimation in \Cref{alg:one_shot}.
Different from generation-based extraction measurements~\cite{carlini2021extracting,carlini2022quantifying} that rely on generation output to compute the extraction ratio, we follow the probabilistic risk estimation~\cite{hayes2024measuring} which uses the product of sampling probabilities at all positions to estimate the one-time extraction probability, $p_z$.
Thus, the estimation framework only involves one-shot forward pass, which is efficient and the resulting probability matches the empirical extraction ratio~\cite{hayes2024measuring}.
The teacher-forcing forward naturally selects the empirically optimal prefix with all preceding tokens before each position.
Additionally, the sliding window design enables an efficient scan over overlapping subsequences in a protected sequence.

Instead of using a fixed decoding mechanism $\phi$~\cite{hayes2024measuring}, our framework estimates the upper bound of $p_z$ especially when setting the worst-case $m=|\mathcal{V}|$, ensuring that no ground truth token is truncated.
This is a practical upper envelope: when $r_t\le m$, $\nicefrac{1}{r_t}$ upper bounds any achievable probability under adaptive $(k, T)$; when $r_t>m$, the exact softmax value is applied, which serves as a practical estimate below the unattainable $\nicefrac{1}{r_t}$.

\subsection{Connection to Other Measurements}\label{sec:method_connection}
Generation-based and probabilistic extraction are representative approaches for targeted and exact measurement.

\shortsection{%
Greedy generation}
Generation-based extraction~\cite{carlini2022quantifying,slack2025early} prompts the model with the prefix and uses greedy decoding to estimate the ratio of extractable suffixes in $D_\text{pro}$.
In practice, two evaluation protocols are used: \textit{chunked} evaluation splits $D_\text{pro}$ into $l$-length non-overlapping prefix–suffix pairs, which is efficient but may miss extractable sequences that span chunk boundaries; \textit{sliding window} evaluation scans all consecutive prefix–suffix pairs over every token offset, avoiding missed pairs at the cost of  $(L-l+1)\cdot l$ model forward passes, where $L$ is the sequence length. 

Since greedy decoding is deterministic, we propose \Cref{alg:for_greedy} to tightly estimate the extractable ratio based on our rank-aware estimator.
Unlike chunked evaluation, it scans all token offsets and thus misses no extractable prefix--suffix pair. 
Unlike sliding window evaluation, it requires only a single teacher-forced forward pass per sequence, reducing the cost from $(L-l+1)\cdot l$ forward passes to one.
As long as the extraction rate with greedy decoding is not zero, the inextractability in \Cref{def:ext_bit} hits the lowest $b=0$, indicating there exists at least one $l$-gram sample can be deterministically extracted.

\begin{algorithm}[tb]
 \caption{Efficient Estimation for Greedy Generation}
 \label{alg:for_greedy}
 \begin{algorithmic}[1]
 \Require LLM $f_\theta$, protected data $D_{\mathrm{pro}}$, extraction length $l$
 \Ensure Extractable rate $\eta$
 \State $c_\text{tot}, c_\text{ext} \leftarrow 0, 0$
 \For{each sequence $\vz \in D_\text{pro}$ of length $L \ge l$}
  \State Get ranks via teacher-forced pass $\{r_t\}_{t=1}^L\leftarrow f_\theta(\vz)$
  \State $i \leftarrow 1$ \Comment{Sliding window start pointer}
  \While{$i \le L - l + 1$}
    \State Find first position $t^* \in [i, i+l-1]$ where $r_{t^*} > 1$
    \If{no such $t^*$ exists}
      \State $c_\text{ext} \leftarrow c_\text{ext} + 1$
      \State $i \leftarrow i + 1$
    \Else
      \State $i \leftarrow t^* + 1$ \Comment{Skip windows that would fail}
    \EndIf
  \EndWhile
  \State $c_\text{tot} \leftarrow c_\text{tot} + (L-l+1)$
 \EndFor
 \State \Return Extraction rate $\eta\leftarrow c_\text{ext} / c_\text{tot}$
 \end{algorithmic}
\end{algorithm}

\shortsection{Conversion to probabilistic measurement}
The recent $(n, p)$-probabilistic extraction~\cite{hayes2024measuring} defines a sample as \emph{extractable} when its generation probability within $n$ trials given a certain decoding configuration is greater than $p$.
They compute the required trials with $n\geq \log{(1-p)}/\log{(1-p_z)}$, where $p_z$ is generation probability of sample $z$ given a specific configuration ($k=40$, $T=1$).

While our \Cref{def:ext_bit} focuses on one deterministic success, we derive the exact conversion of required query cost in our definition of \emph{$(l, b, \delta)$-inextractability} in \Cref{def:bit_delta} given the probabilistic success rate $p=1-\delta$.
When the maximum one-shot probability $p_{\vz}^*$ is small, the cost increases by approximately a $\ln(1/\delta)$ factor, which is derived from the Taylor expansion of $-\ln(1-p_{\vz}^*)=p_{\vz}^*+\tfrac{p_{\vz}^{*2}}{2}+O(p_{\vz}^{*3})$ for $p_{\vz}^*\ll 1$.
Compared to the exact number of required trails $n_\text{exact}=\frac{\ln(1/\delta)}{-\ln(1-p_{\vz}^*)}$, the approximated $n_\text{stable}=\ln(1/\delta)/p_{\vz}^*$ is numerically more stable.

\begin{definition}[\textbf{Probabilistic Extraction Risk}]\label{def:bit_delta}
An oracle $\mathcal{O}$ to an LLM satisfies \textbf{$(l, b, \delta)$-inextractability} with respect to a protected set $D_\text{pro} \subseteq D$ if it requires at least $2^{b(\delta)}$ expected independent trials to extract any subsequence $\vz\in D_\text{pro}$ of length at least $l$ grams with probability at least $1-\delta$, where $2^{b(\delta)} = \frac{\ln(1/\delta)}{-\ln(1-2^{-b})} \approx \ln(1/\delta)\cdot 2^b$ for small $2^{-b}$.
\end{definition}

\begin{figure*}[htb]
\centering
\includegraphics[width=0.48\linewidth]{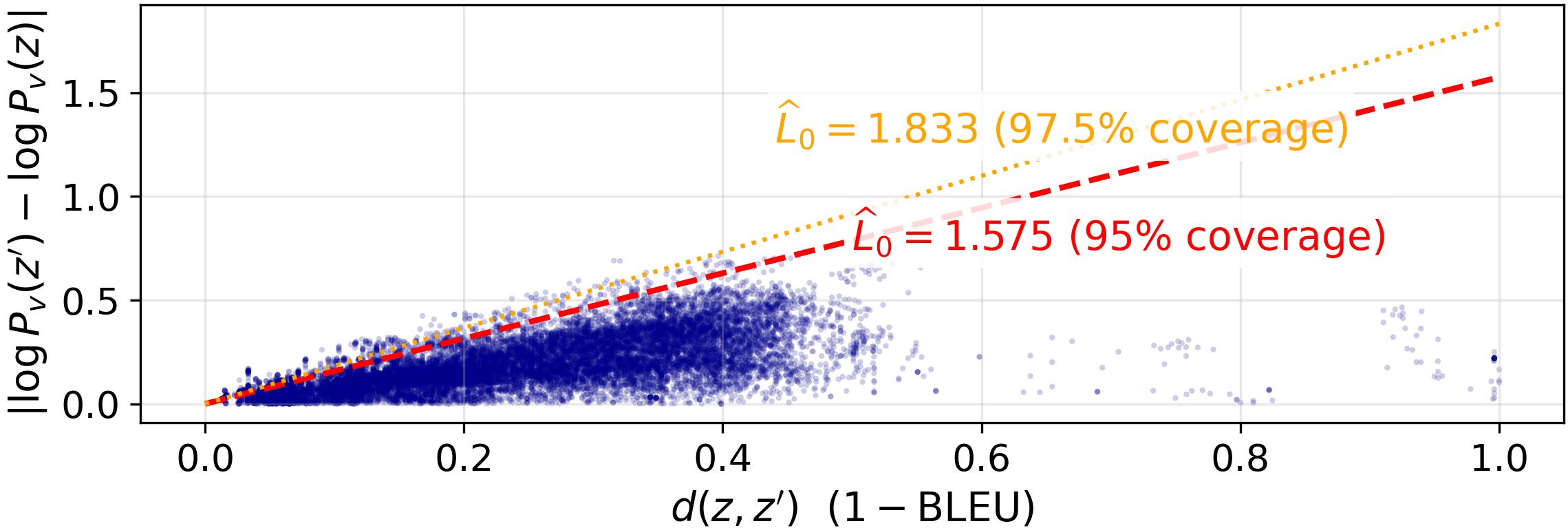}
\includegraphics[width=0.49\linewidth]{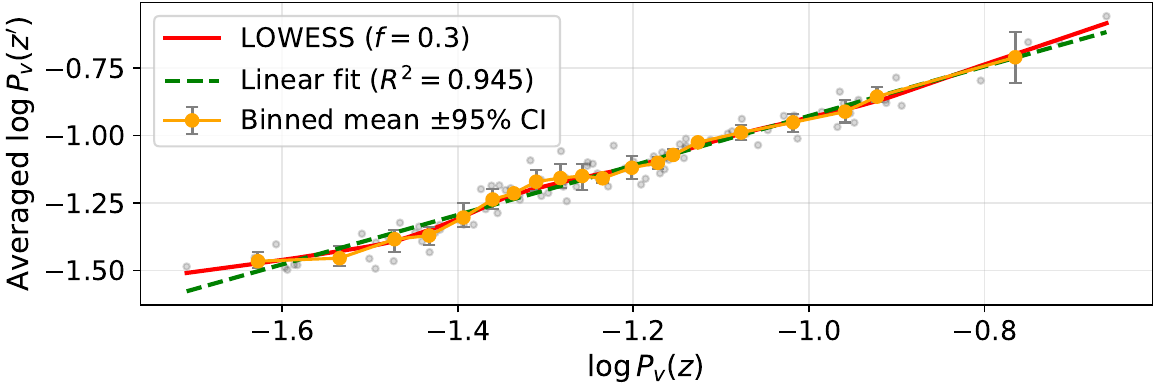}
\caption{
\textbf{(Left)}: Correlation between the per-pair log-probability difference and the distance for center-neighbor pairs. 
The dashed and dotted lines show estimated local log-Lipschitz bounds covering 95\% and 97.5\% of pairs.
\textbf{(Right)}: Correlation between the length-normalized verbatim extraction probability $P_v(z)$ (x-axis) and the average $P_v(z^\prime)$ (y-axis) for neighbors with $d(z, z^\prime)<0.2$. 
The monotonic trend suggests that reducing the extractability of a center text also shrinks the average extraction probability of its neighbors.
}
\label{fig:beyond_prob_change}
\end{figure*}

While our \Cref{def:bit_delta} has an equivalent form as $(n, p)$-probabilistic measurement, our estimate tightly upper bounds the probabilistic measurement.
Because we consider the maximum one-shot generation probabilistic $p_z^*$ over any decoding configurations, while in $(n, p)$-probabilistic extraction that value of $n$ is derived given the measured probability $p_z$ with a specific $\phi$.
Additionally, we take $(l,b)$ as the core guarantee and treat $\delta$ as an external conversion knob because of:
(1) \emph{simplicity}: one scalar simplifies comparison and is clearer for governance than a pair of scalars; 
(2) \emph{interpretability}: $2^b$ essentially measures the attacker's investment/success ratio $n/p$, which is maximized at $n=1$; and
(3) \emph{conservatism}: dropping $\delta$ in the core metric prevents overfitting guarantees to a single confidence target, and probabilistic forms just multiply by $\ln(1/\delta)$.

\subsection{Untargeted and Approximate Extraction}\label{sec:method_beyond}
Our risk measurement is defined with the exact match criterion and the targeted attack goal.
While it mainly reflects the targeted verbatim generation risk, we discuss its extension to other data extraction goals here.

\shortsection{Untargeted extraction} We extend the targeted $(l,b)$-inextractability guarantee to an untargeted setting, where success corresponds to emitting any protected $l$-gram. 
The following theorem (proven in Appendix~\ref{sec:appendix_proofs}) 
quantifies the induced extractability erosion:
\begin{theorem}[Untargeted Inextractability]\label{thm:targeted_to_untargeted}
Suppose the LLM oracle $\mathcal{O}$ satisfies $(l,b)$-inextractability with respect to $D_{\text{pro}}$ (Definition~\ref{def:ext_bit}), with $\mathcal{S}=\{\vz\in D_{\text{pro}}:|\vz|=l\}$ as the set of protected $l$-grams and $|\mathcal{S}| = M$. Then the single-attempt probability that $\mathcal{O}$ emits some protected $l$-gram satisfies
$
\Pr[\exists \vz\in\mathcal{S}: \mathcal{O} \text{ emits }\vz] \le \min\{1, M \cdot 2^{-b}\}.
$ \\
Defining the untargeted inextractability
$$
b_{\mathrm{un}} := -\log_2 \Pr[\exists \vz\in\mathcal{S}: 
\mathcal{O} \text{ emits }\vz],
$$
we obtain the lower bound
$
b_{\mathrm{un}} \ge \max\{0,\, b - \log_2 M\}.
$
\end{theorem}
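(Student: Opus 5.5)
The proof is a union bound over the protected set, combined with a translation of the $(l,b)$-inextractability guarantee into a per-$l$-gram bound on single-attempt emission probability.

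The first step is to extract the per-target bound from \Cref{def:ext_bit}. By the discussion following it (the cost--success ratio $n/p$ is minimized at $n=1$), the condition that every $\vz\in D_{\text{pro}}$ needs at least $2^b$ expected independent trials gives $b \le -\log_2 p_{\vz}$ for every $\vz$. Here $p_{\vz}$ is the single-attempt probability that the oracle emits $\vz$, and it may be taken to be the worst-case $p_{\vz}^*$ of \Cref{lemma:independent_optimal}. Equivalently, $p_{\vz}\le 2^{-b}$ for each $\vz\in\mathcal{S}$. One point needs care: the untargeted adversary may use an arbitrary prompt and decoding configuration, not the target's own prefix. I will read the definition as quantifying over all black-box strategies, as in the statement ``any black-box adversary'', so the bound $\Pr[\mathcal{O}\text{ emits }\vz]\le 2^{-b}$ holds for whatever single query the adversary issues.

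The second step fixes an arbitrary single attempt, meaning one prompt and one decoding choice, and writes the event of interest as $E=\bigcup_{\vz\in\mathcal{S}}E_{\vz}$, where $E_{\vz}$ is the event that the output equals (or contains) $\vz$. The union bound then gives
\begin{align*}
\Pr[E]\le \sum_{\vz\in\mathcal{S}}\Pr[E_{\vz}] \le M\cdot 2^{-b}.
\end{align*}
Since $\Pr[E]\le 1$ trivially, this yields $\Pr[E]\le\min\{1,M2^{-b}\}$. Because the attempt was arbitrary, the bound also holds for the supremum over strategies.

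The third step takes $-\log_2$, which is monotone decreasing, to obtain $b_{\mathrm{un}}\ge -\log_2\min\{1,M2^{-b}\}=\max\{0,b-\log_2 M\}$. If $\Pr[E]=0$, then $b_{\mathrm{un}}=\infty$ and the bound holds trivially.

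The main obstacle is the first step: justifying that the targeted guarantee controls the probability of emitting $\vz$ under an arbitrary query, rather than only under the target's own prefix. I would handle this by stating explicitly that $p_{\vz}$ in \Cref{def:ext_bit} is the maximum over all prompts and decoding parameters. This is consistent with the worst-case game of \Cref{sec:method_worst}. I would also remark that the events $E_{\vz}$ may overlap, since several protected $l$-grams can appear in one output. The union bound is unaffected by this overlap and only becomes looser.
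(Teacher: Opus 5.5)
Your proposal is correct and follows essentially the same route as the paper: convert $(l,b)$-inextractability into the per-$l$-gram bound $p_{\vz}^*\le 2^{-b}$, apply the union bound over the $M$ elements of $\mathcal{S}$, cap at $1$, and use $-\log_2\min\{1,x\}=\max\{0,-\log_2 x\}$. Your remark that $p_{\vz}$ must be read as the maximum over all prompts and decoding configurations, so that it controls an arbitrary untargeted query, is something the paper leaves implicit when it simply invokes the optimal single-attempt probability $p_{\vz}^*$.
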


If the per-$\vz$ emission events are additionally assumed to be conditionally independent,
the union bound tightens to $b_{\mathrm{un}} \ge -\log_2\big(1-(1-2^{-b})^{M}\big)$, which strictly improves the union-bound expression when $M \cdot 2^{-b}\ll 1$. 
In either case, moving from targeted to untargeted extraction incurs a loss of at most $\log_2 M$-inextractability; choosing $b \ge \log_2 M + \log_2(1/\gamma)$ ensures that the probability of emitting any protected $l$-gram in one trial is at most $\gamma$.

\shortsection{Approximate matching} 
Exact (verbatim) bounds are more conservative than approximate-match measurements~\cite{ippolito2022preventing,hayes2024measuring,kassem2024alpaca,barbero2025extracting}. 
Directly applying \Cref{thm:targeted_to_untargeted} to all approximate variants is infeasible because the set of paraphrases and small edits is practically unbounded. We therefore ask: \emph{under what conditions does reducing the extraction risk of a center text also reduce that of its neighborhood?}

Let $d(\cdot,\cdot)$ be any nonnegative distance or dissimilarity measure over sequences with $d(\vz,\vz)=0$ (e.g., edit distance, embedding-based similarity, $1-$BLEU and $1-$ROUGE-L 
or other task-specific similarity thresholds). 
For a reference sample $\vz$ and radius $c>0$ define the closed neighborhood
$
\mathcal{N}(\vz) := \{ z' : d(\vz,\vz') \le c \}.
$

Since $\mathcal{N}(\vz)$ may contain sequences of different lengths, let $P_v(\cdot)$ denote the length-normalized version of $p_{\vz}^*$.
To avoid numerical instability on very small probabilities, we work in log space. 
The following local regularity assumption is stated for a general metric $d$; $L_0$ can be empirically estimated as shown in \Cref{fig:beyond_prob_change} (estimation for other distances in Appendix~\ref{APP:sec:add_results}).

\begin{assumption}[Local Log-Lipschitz Regularity]\label{ass:lipschitz}
There exists $L_0\ge 0$ such that for all texts $x,y\in\mathcal{N}(\vz)$,
$$
\big|\log P_v(x) - \log P_v(y)\big| \le L_0\, d(x,y).
$$
\end{assumption}

\noindent Based on \Cref{ass:lipschitz}, \Cref{cor:ratio_suppression} shows defenders can target an inextractability threshold: a defense improving the inextractability level by $\Delta_b > (L_0+L_0')c/\ln 2$ suppresses the average likelihood of generating its approximately similar neighbors.
We illustrate this correlated trend in \Cref{fig:beyond_prob_change} with BLEU based distance and $c=0.2$.

\begin{corollary}[Ratio-Based Neighborhood Suppression]\label{cor:ratio_suppression}
Assume pre- and post-defense log-Lipschitz regularity with constants $L_0$ and $L_0'$ over radius $c$ (Assumption~\ref{ass:lipschitz}). If the defense improves the verbatim inextractability by $\Delta_b$ bits
$$
\Delta_b > \frac{(L_0+L_0')c}{\ln 2},
$$
then the average worst-case verbatim generation probabilities for all neighbors in $\mathcal{N}(z)$ are jointly suppressed.
\end{corollary}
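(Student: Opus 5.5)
We work in log space and read ``jointly suppressed'' as follows: for every neighbor $\vz' \in \mathcal{N}(\vz)$, the post-defense length-normalized worst-case probability $P_v'(\vz')$ is strictly smaller than the pre-defense value $P_v(\vz')$. Averaging over the neighborhood then gives the claim about average probabilities.

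\textbf{Step 1: the center.} The defense improves the verbatim inextractability of the center by $\Delta_b$ bits. Since $b=-\log_2 p_{\vz}^*$, this gives
\begin{align*}
\log P_v'(\vz) = \log P_v(\vz) - \Delta_b \ln 2 .
\end{align*}
Here we take $\log$ to be the natural logarithm, which is where the factor $\ln 2$ in the threshold comes from. We also treat the bit improvement as holding for the length-normalized quantity $P_v$, consistent with the paper's use of $P_v$ as the normalized $p_{\vz}^*$.

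\textbf{Step 2: transfer to a neighbor.} Fix any $\vz'\in\mathcal{N}(\vz)$, so $d(\vz,\vz')\le c$. Apply Assumption~\ref{ass:lipschitz} twice, once before the defense with constant $L_0$ and once after with constant $L_0'$:
\begin{align*}
\log P_v'(\vz') &\le \log P_v'(\vz) + L_0' c, \\
\log P_v(\vz') &\ge \log P_v(\vz) - L_0 c .
\end{align*}
Chaining these with Step 1 yields
\begin{align*}
\log P_v'(\vz') - \log P_v(\vz') \le -\Delta_b \ln 2 + (L_0+L_0')c .
\end{align*}
The right-hand side is strictly negative exactly when $\Delta_b > (L_0+L_0')c/\ln 2$. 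Hence
\begin{align*}
P_v'(\vz') \le \rho\, P_v(\vz'), \qquad \rho := e^{(L_0+L_0')c-\Delta_b\ln 2} < 1 ,
\end{align*}
uniformly over all $\vz'\in\mathcal{N}(\vz)$.

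\textbf{Step 3: averaging.} Take any averaging measure over $\mathcal{N}(\vz)$, for example uniform over a finite neighbor set or the empirical neighbors used in \Cref{fig:beyond_prob_change}. By linearity and the uniform bound from Step 2,
\begin{align*}
\mathbb{E}\big[P_v'(\vz')\big] \le \rho\,\mathbb{E}\big[P_v(\vz')\big] < \mathbb{E}\big[P_v(\vz')\big],
\end{align*}
provided the pre-defense average is positive. This gives the joint suppression of the average.

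\textbf{Main obstacle.} The argument itself is short; the difficulty is fixing the interpretation so that it is rigorous. Two points need care. First, the assumption must hold both pre- and post-defense on the same neighborhood. Second, the $\Delta_b$ improvement must apply to the length-normalized $P_v$ at the center rather than only to the raw $p_{\vz}^*$. I would state both explicitly as hypotheses inherited from the corollary's premise, and note that the log base convention accounts for the $\ln 2$.
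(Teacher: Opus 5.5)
Your proof is correct and follows essentially the paper's argument. Both chain the same three facts: the pre-defense lower Lipschitz bound, the post-defense upper Lipschitz bound, and $P_v'(\vz)=2^{-\Delta_b}P_v(\vz)$ at the center. The paper applies this chain directly to the neighborhood means via Lemma~\ref{lem:neigh_concentration}, $\mu' \le e^{L_0' c}P_v'(\vz) < e^{-L_0 c}P_v(\vz) \le \mu$, whereas you apply it to each neighbor and then average, which gives the slightly stronger uniform factor $\rho<1$ per neighbor. Your caveat that the $\Delta_b$-bit improvement must hold for the length-normalized $P_v$, not just the raw $p_{\vz}^*$, is fair; the paper's substitution $P_v'(\vz)=P_v(\vz)2^{-\Delta_b}$ assumes this without comment.
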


\section{Evaluation of Extraction Risk}\label{sec:experiment}

This section reports on experiments to evaluate the defined inextractability and analyze its properties. 
\Cref{sec:mia_ext} analyzes the interplay between extractability and indistinguishability;
\Cref{sec:comparing} compares different extraction measurement methods; \Cref{sec:understandingcost} employs our methods to understand extraction cost.

Instead of using real-world LLM APIs, we use open-sourced model for a fully controlled $m$.
Following previous extraction works~\cite{carlini2022quantifying,hayes2024measuring,kassem2024alpaca}, we use GPT-2-Small~\cite{radford2019language} on the Enron email dataset~\cite{klimt2004enron} to evaluate risk of fine-tuning data, and use LlaMA-3.1-8B~\cite{grattafiori2024llama} on the Pile~\cite{gao2020pile} and BookSum~\cite{kryscinski2021booksum} to evaluate pre-training data extraction.

\subsection{Extractability and Distinguishability}\label{sec:mia_ext} %

Comparing various MIAs with different signals and our estimated extraction cost, we obtain results similar to \Cref{fig:mia_ext_sample} and include one example with $l=50$ in \Cref{APP:fig:mia_ext} (in the Appendix).
\Cref{fig:mia_ext_corr} summarizes the correlation coefficient between the different membership inference attack signals and extraction cost. 

\shortsection{Influence of extraction sequence length}
For all MIA signals, a consistent observation is that the correlation between indistinguishability and extractability increases with longer sequences with larger $l$.
This is because shorter member or non-member samples are more ambiguous than longer sequences~\cite{duan2024membership}, blurring the correlation between extractability and distinguishability.

\shortsection{Influence of membership signal}
Consistent with our discussion in \Cref{ssec:explaininginconsistencies}, the calibration-based MIA signals (Zlib, Ref-Loss) have less correlation with the extraction costs than uncalibrated  MIA signals (Loss, Mink).
From the AUC of different MIAs (\Cref{APP:fig:auc_correlation} in Appendix), we reveal that the correlation between distinguishability and extractability depends more on whether the MIA signal is uncalibrated than on its ability to distinguish members.

\begin{figure}
\centering
\includegraphics[width=0.95\linewidth]{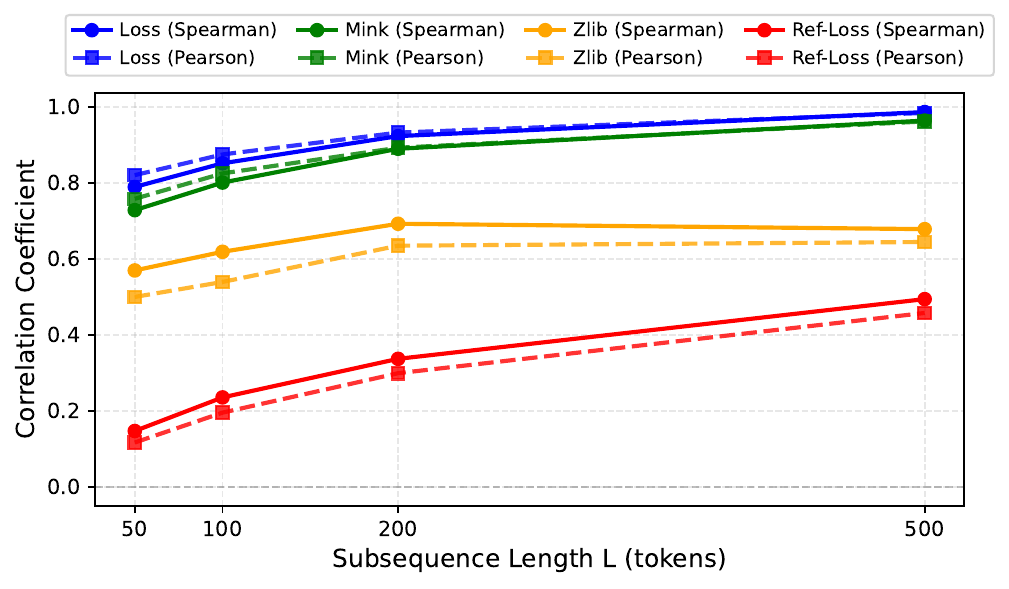}
\caption{Correlation between indistinguishability (1-MIA AUC) and Indistinguishability (Extraction Cost).}
\label{fig:mia_ext_corr}
\end{figure}

\begin{figure}
\centering
\includegraphics[trim=0 0 0 0, clip, width=0.92\linewidth]{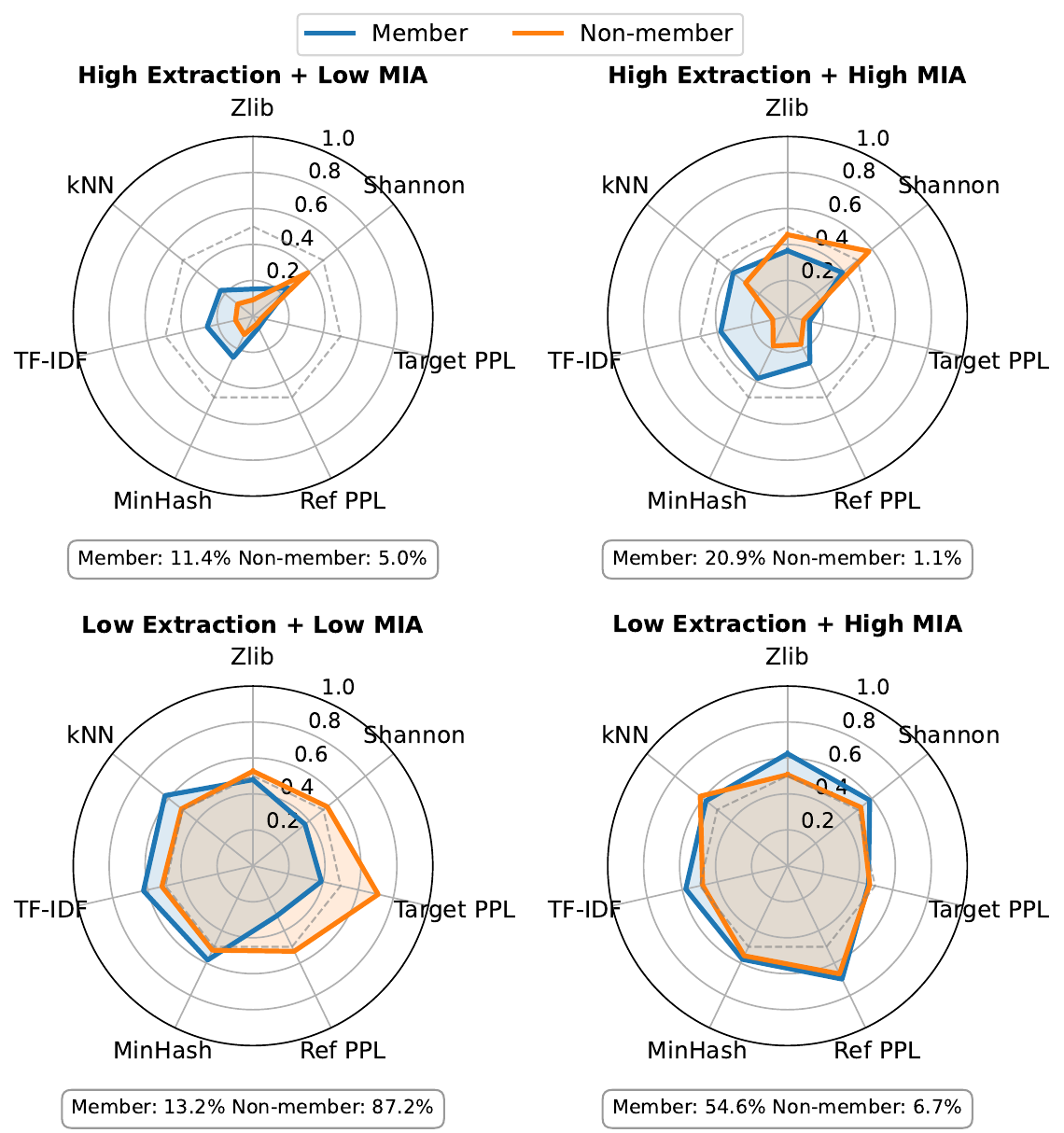}
\caption{
Radar charts comparing text properties (Zlib, Shannon entropy, Target/Ref PPL, MinHash, TF-IDF) between member (blue) and non-member (orange) samples, on a fine-tuned GPT-2 model (Enron, $l=500$, $m=20$). 
Axes show mean empirical quantiles (0-1) within each group, stratified by high extraction risk (cost$<1000$) and high MIA risk (confidence$>0.95$).
Groups percentages indicate the fraction of all member / non-member samples falling into each group.
}
\label{fig:mia_ext_radar}
\end{figure}

\shortsection{Influence of text properties}
We analyze the underlying properties for a text subsequence to have high distinguishability risk or extractability risk from the following aspects:
1) \textit{complexity}, including structural complexity Zlib$\uparrow$ and lexical complexity TF-IDF$\uparrow$; 
2) \textit{diversity} (Shannon entropy$\uparrow$);
3) \textit{uniqueness}, including \emph{n}-gram level (MinHash-based Jaccard Similarity$\downarrow$), and semantic level (kNN density$\downarrow$).
We use the best-performing reference-model~\cite{carlini2022membership} signal to indicate the per-sample MIA risk, and use our measurement to indicate the per-sample extraction risk.
For a transparent data split in MIA, we demonstrate with GPT-2 fine-tuned with the Enron dataset.
To train the reference model, we use an auxiliary set disjoint from fine-tuning data.

\shortsection{Factors influencing extractability} The radar plots in \Cref{fig:mia_ext_radar} reveal two distinct vulnerability modes. 
(1) \emph{Low-complexity / low-diversity} texts: with low structural (Zlib) or lexical (TF--IDF) complexity and low diversity (Shannon), such as form emails and boilerplate notices; their repetitive structure concentrates probability mass and enables deterministic regeneration. 
(2) \emph{High-uniqueness} texts: high \emph{n}-gram distinctness (low MinHash similarity) and sparse semantic neighborhoods (low kNN density), whose continuations lie in a narrow token set, making exact suffix emission likely once the prefix is known. 
We show in Appendix~\ref{APP:sec:add_results} that the two types are neither identical nor mutually exclusive with 0.3--0.5 Spearman correlation.

\vspace{1ex}
\noindent\emph{Implication.} Na\"{i}ve deduplication can backfire---aggressively removing near-duplicates may elevate relative \emph{n}-gram or semantic uniqueness and increase worst-case extractability.

\shortsection{Factors influencing distinguishability} In contrast, \Cref{fig:mia_ext_radar} indicates membership inference risk is not strongly associated with structural or lexical complexity; the dominant driver is the per-sample perplexity improvement from the reference model to the target model. The ``High MIA'' group occupies a larger aggregate radar area without extreme simplicity, indicating that extraction and membership inference target partially disjoint victim sets. 

\vspace{1ex}
\noindent\emph{Implication.} Audits limited to MIAs may miss samples that are at highest extraction risk under a worst-case attack.

\begin{figure*}
\centering
\includegraphics[width=\linewidth]{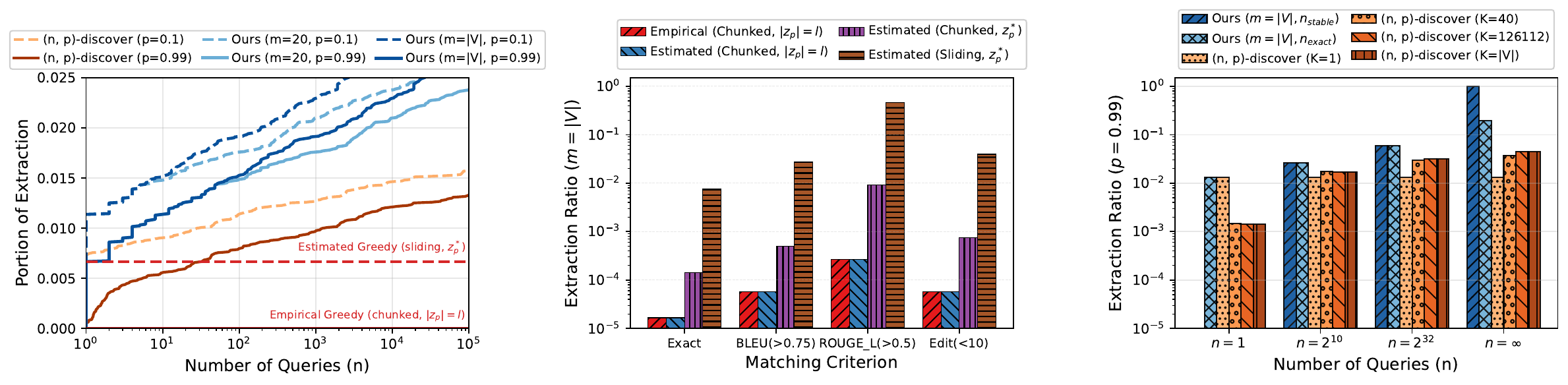}
\caption{Comparison with generation-based~\cite{carlini2022quantifying} and probabilistic extraction~\cite{hayes2024measuring} on Llama-3.1-8B with Pile-CC. 
The extraction portion denotes the ratio of extractable sub-sequences identified across $9\times10^5$ sliding windows of length $l=50$.
\textbf{(Left)} is exact extraction comparison. 
\textbf{(Middle)}: the tight estimation of \Cref{alg:for_greedy} on greedy generation across different matching metrics with $n=1$, and our advantage of using all prefix and comprehensive scanning via efficient sliding window.
\textbf{(Right)}: compares probabilistic extraction with  $\delta=0.01$ across truncation size $k$; 
$k=126122$ is the maximum true token rank $r_t$ over all samples;
$n_\text{exact}$ and $n_\text{stable}$ denote the exact and approximated numbers for required trials to achieve success rate of 1-$\delta$ in \Cref{def:bit_delta}.
}
\label{fig:extraction}
\end{figure*}

\subsection{Comparing Extraction Risk Measurements}\label{sec:comparing}

We compare our measurement with both generation-based and probability-based measurements in \Cref{fig:extraction}.

\shortsection{Limitations of empirical generation}
In \Cref{fig:extraction} (Left), the extraction ratio with greedy decoding does not scale with $n$ because it is deterministic.
And the extraction ratio is low for two reasons: 
(1) the limited prefix length $|\vz_p|$ is less informative than using the full preceding context; and
(2) for the chunked version, the chunking procedure  misses vulnerable samples across chunk boundaries.

\shortsection{Limitations of probabilistic extraction}
While the extraction ratio of $(n,p)$-discovery  in \Cref{fig:extraction} (Left) scales with $n$, 
(1) it is inferior to greedy generation with small $n$ and high success probability threshold $p=0.99$ because the specific decoding with $k=40$ dilutes the rank-1 confident tokens; 
(2) from \Cref{fig:extraction} (Right), the optimal decoding configuration influences the extraction ratio and the optimal choice is inconsistent across $n$; 
and (3) the extraction ratio is capped at around $0.015$ even with a very large $n=10^5$ because samples whose ground-truth token rank exceeds the truncation size $k$ have their sampling probability truncated to zero and can never be extracted.

\shortsection{Our advantage}
Our \Cref{alg:one_shot} provides a better overall upper bound than other measurements, as a result of using all prefix and optimal decoding.
In \Cref{fig:extraction} (Middle), our \Cref{alg:for_greedy} tightly estimates the greedy generation on chunked text with fixed prefix.
Using the full prefix allows us to find more vulnerable samples with $n=1$.
Further, the efficiency of \Cref{alg:for_greedy} allows us to scan all sliding window text, avoiding missing any vulnerable sequences.
As seen in  \Cref{fig:extraction} (Right), our \Cref{alg:one_shot} provides a consistent risk upper bound without relying on a specific decoding mechanism $\phi$.
In practice, $n_\text{stable}$ avoids underestimating the risk because it captures tail samples with $p^*_{\vz}$ approaching 0.

\subsection{Understanding Extraction Cost}\label{sec:understandingcost}

Our \Cref{def:ext_bit} assumes a strong adversary able to control the prefix and decoding strategy to maximize suffix recovery. Because real attackers cannot achieve this idealized power, we analyze the gap between the theoretical cost $2^b$ (with $1\le b\le l\log_2 |\mathcal{V}|$) and realistic cost.

\begin{figure}
\centering
\includegraphics[trim=0 0 0 21, clip, width=0.8\linewidth]{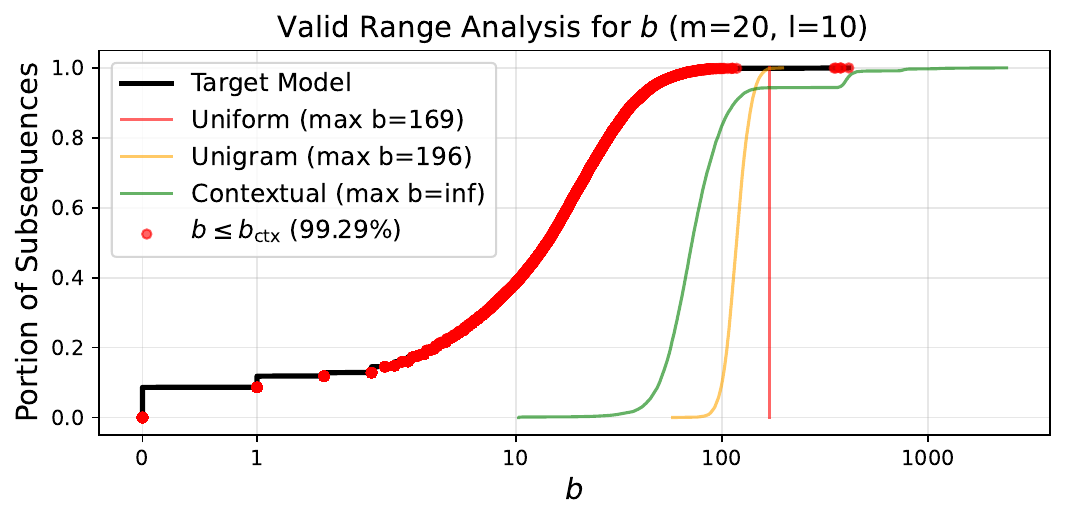}
\caption{
Analysis of valid $b$ range on extracting Pile-CC ($m=20, l=10$) from Llama-3.1-8B.
\textit{Blind baselines:} Uniform, Unigram, and Contextual have progressively stronger priors. 
\textit{In-context baseline:} assumes the target suffix appears in the prefix, providing a threshold $b_{\text{ctx}}$. 
Subsequences with cost below $b_{\text{ctx}}$ are highly vulnerable, while those exceeding the maximum blind-baseline cost are comparatively less vulnerable.}
\label{fig:rangeb_blind}
\end{figure}

\shortsection{Maximum cost from blind tests}
\Cref{fig:rangeb_blind} shows three blind baselines approximating the minimum extraction cost without access to the target model: 
(1) \emph{Uniform}, tokens sampled uniformly;
(2) \emph{Unigram}; sampled from empirical token frequencies; and 
(3) \emph{Contextual}, proxy LM conditioned on preceding tokens.
Stronger priors progressively lower cost. 
For many subsequences, access to the target model collapses the minimum $b$ from around 10 (contextual blind) to near $0$, highlighting vulnerable samples. 
Blind baselines thus provide conservative (contextual) and optimistic (unigram) resilience thresholds.

\shortsection{Minimum cost from in-context test}
In \Cref{fig:rangeb_blind}, we define an in-context baseline where the target suffix is fully revealed in the prompt. 
Specifically, given a complete text $z$, we prompt the target model as follows:
\begin{tcolorbox}[
    colback=gray!8,
    colframe=gray!40,
    boxrule=0.4pt,
    arc=2pt,
    left=4pt,
    right=4pt,
    top=3pt,
    bottom=3pt,
]
\small
\textbf{Prompt:} \textit{Please repeat after me:} "Mine ear is open, and my heart prepared". Mine ear is open, and my \\
\textbf{→ Expected:} heart prepared
\end{tcolorbox}

\noindent
The optimal decoding cost $\phi^*$ under this repeat prompt sets a practical lower bound $b_{\text{ctx}}$ on realistic leakage. 
If a subsequence's cost falls below $b_{\text{ctx}}$, it is flagged as high-risk. 
Over $99\%$ of protected subsequences remain below the in-context baseline even with repeated prompts, indicating extraction of content from training data rather than explicit context leakage.

\begin{figure}[tb]
\centering
\includegraphics[trim=0 0 0 20, clip, width=0.75\linewidth]{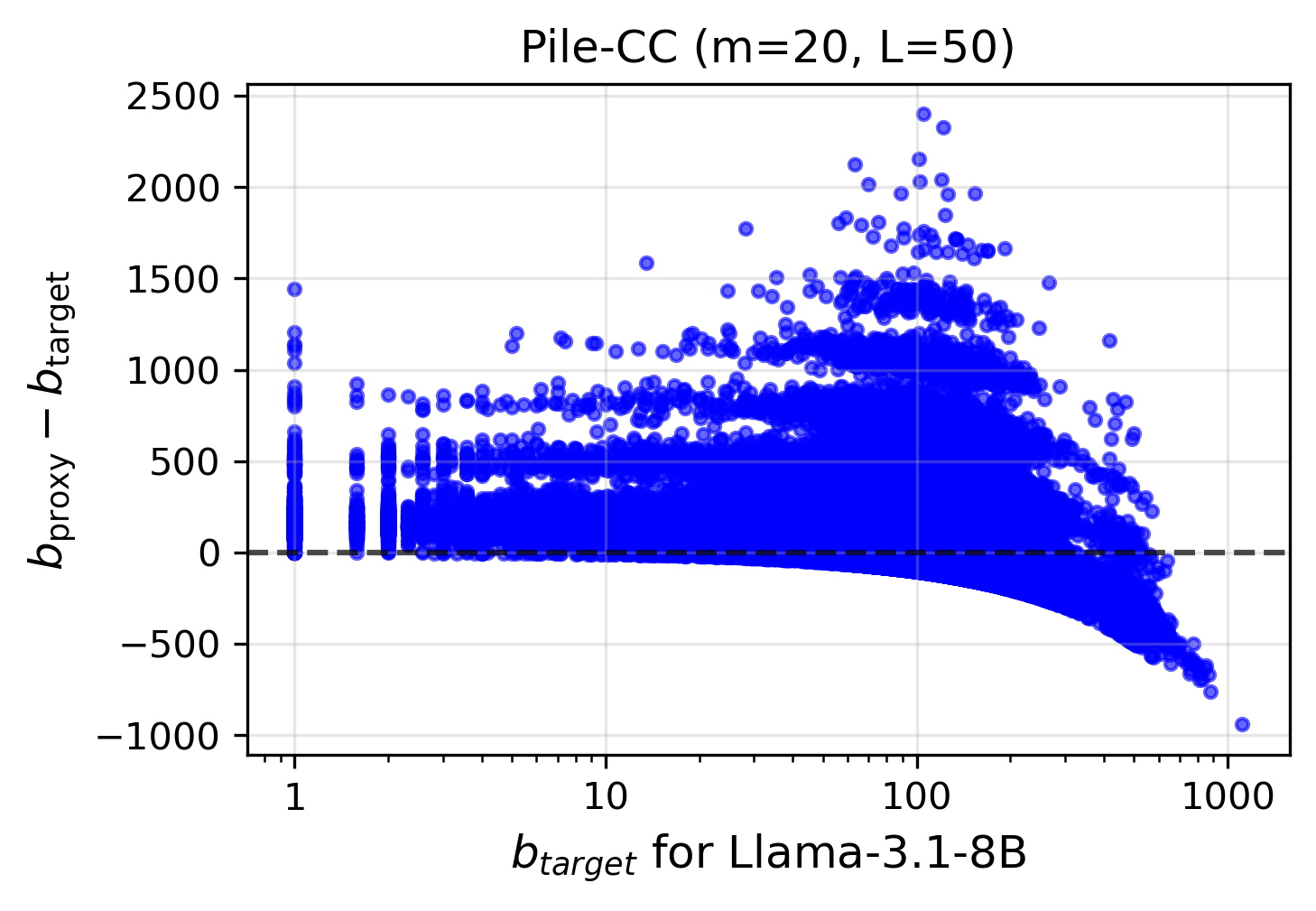}
    \caption{Privacy cost comparison on the Pile-CC ($m=20, L=50$). Each point represents one subsequence. 
    Across 483,485 subsequences, the target exceeds the proxy model (GPT-2) in 398,385 cases (82.4\%).
    }
    \label{fig:rangeb_reduction}
\end{figure}

\begin{figure*}
\centering
\includegraphics[width=0.48\linewidth]{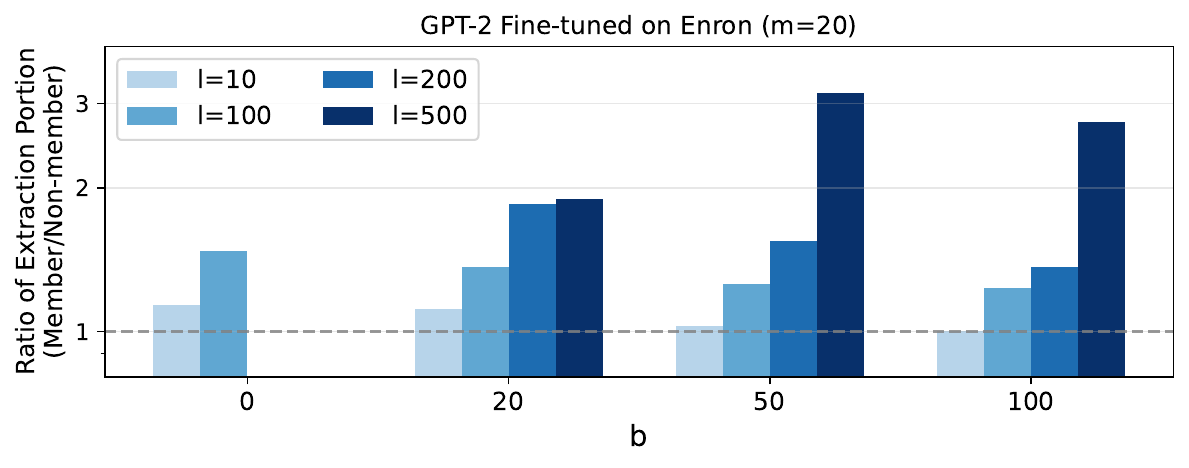}
\includegraphics[width=0.48\linewidth]{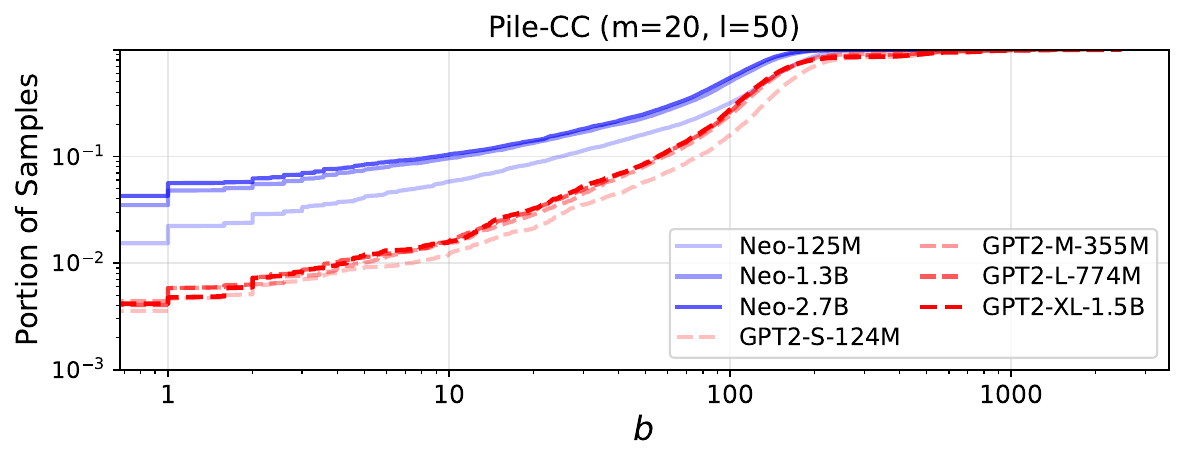}
\caption{
Extraction difficulty of member and non-member data across sequence length $l$ and model scale.
\textbf{(Left)}: Ratio of extraction portions (member/non-member) in Enron at fixed $b$.
\textbf{(Right)}: Following~\cite{carlini2022quantifying}, the Pile-CC dataset (not trained by GPT-2) is treated as the non-member split for GPT-2.
}
\label{fig:mem_non_scale}
\end{figure*}

\begin{figure}
\centering
\includegraphics[width=0.9\linewidth]{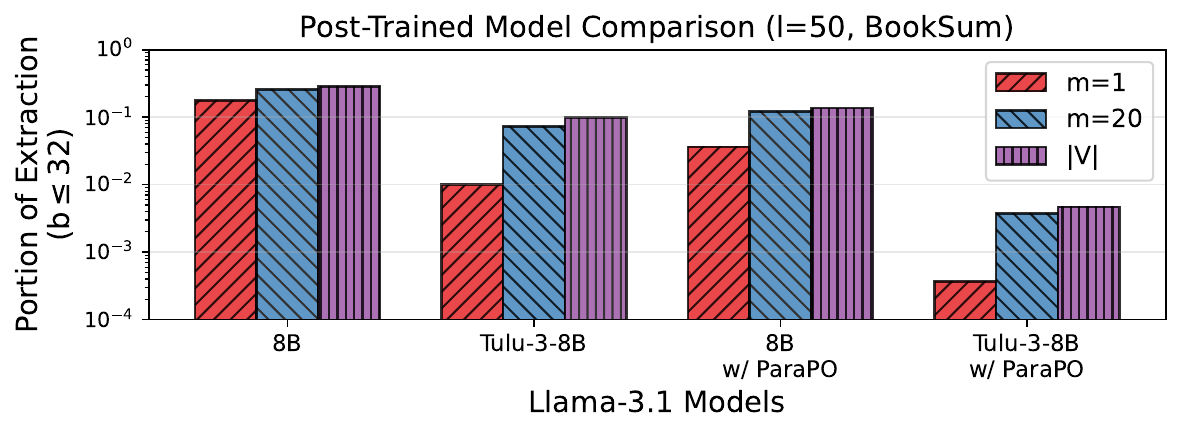}
\includegraphics[width=0.9\linewidth]{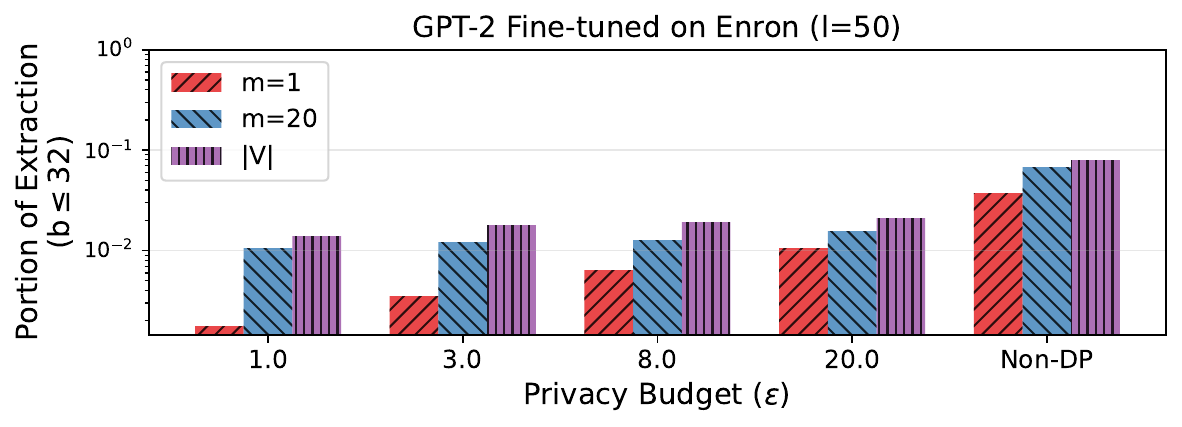}
\caption{Defense Evaluation.
\textbf{(Up)} combining instruction-tuning (Tulu-3) with preference optimization (ParaPO) substantially mitigates risks.
\textbf{(Bottom)} DP training reduces extractability but shows limited gains under small $\epsilon$, suggesting a gap between distinguishability defense and extraction.}
\label{fig:defense}
\end{figure}

\shortsection{Calibrated extraction cost}
To filter out common memorization, prior works introduce the notion of counterfactual memorization~\cite{zhang2023counterfactual,pappu2024measuring}, which calibrates memorization scores using a proxy model that has not seen the target sample.  
As shown in \Cref{fig:rangeb_reduction}, samples with low extraction cost tend to exhibit non-negative counterfactual memorization, indicating that our extraction cost can effectively identify uncommon memorization.

\shortsection{Member vs.\ non-member} 
In \Cref{fig:mem_non_scale}, we compare the extractability of training (member) and non-training (non-member) for fine-tuned (Left) and pre-trained (Right) models.
We observe that members consistently exhibit higher extraction portion than non-members given $b$.
For the fine-tuned GPT-2 on Enron, we compare by the ratio of extraction portions to quantify the memorization gain from being included in training.
A larger member-to-non-member ratio indicates that member samples are substantially easier to extract.
The gap is generally small, because members and non-members are randomly split and the model generalizes well.
For both fine-tuned and pre-trained models, there exist non-member samples that can be extracted with low cost, implying that verbatim generation does not require direct inclusion in training data~\cite{liu2025language}.

\shortsection{Influence of sequence length} 
Longer sequences (larger $l$) naturally correspond to higher extraction costs, as probability of extraction decays with $l$.
In \Cref{fig:mem_non_scale} (Left), the ratio increases with larger $l$, for two reasons:
(1) short sequences of members and non-members overlap more easily, making attribution ambiguous~\cite{duan2024membership};
(2) short suffixes even overlap with their prefixes, so they are likely to be retrieved than generated.
Hence, in practice, $l$ should be chosen large enough to avoid overlap between the prefix and the suffix.

\shortsection{Influence of model size}
As shown in \Cref{fig:mem_non_scale} (Right), larger models (darker lines) tend to have lower extraction costs than smaller models.
The member–non-member gap is also more pronounced for pre-trained models, possibly due to the unknown training corpora differences.

\section{Defenses for Extractability}\label{sec:defense}

This section briefly discusses possible mitigations for reducing extractability. 

\shortsection{Training-Phase Defenses}
Most previous works consider the extraction boundary as the white-box surface and thus focus on training-phase defenses, including
(1) \emph{data-processing}, including  deduplication~\cite{lee2021deduplicating} or sanitization~\cite{morris2022unsupervised,staab2024large};
(2) \emph{privacy-preserving training}, including incorporating DP noise in optimization~\cite{abadi2016deep} and empirical defenses based on regularization~\cite{hans2024like,tran2025tokens}; and
(3) \emph{post-training}, which introduces extra training stage for mitigating extractability risk of the previous model, such as ParaPO~\cite{chen2025parapo}. Instruction-tuned models~\cite{lambert2024tulu} have also been found to reduce verbatim generation~\cite{chen2025parapo}.

\shortsection{Inference-Phase Defense}
Extractability risk could be mitigated by restricting or limiting the API interface. For example: 
(1) \emph{API access control}, assuming queries can be attributed to individuals, limiting maximum $m$ serves as a natural defense, as indicated by \Cref{fig:extraction} (Left). It may also be possible to detect sequences of queries that appear to be probe attempts, and rate-limit or block such queries. 
(2) \emph{Decoding constraints}, suppressing the sampling probability~\cite{basu2020mirostat} or interrupting when overlap is detected~\cite{ippolito2022preventing} can directly reduce extractability. This can be combined with limits on the available decoding parameters that can be selected by users.

\shortsection{Evaluation and Discussion}
We evaluate both potential empirical mitigation strategies and differentially private training in \Cref{fig:defense}.
For empirical defense evaluation, we include instruction-tuning (with the Tulu-3~\cite{lambert2024tulu} model tuned on Llama-3.1-8B~\cite{grattafiori2024llama}) and a recently proposed post-training ParaPO~\cite{chen2025parapo} (with direct preference optimization tuned on the both Llama-3.1-8B and Tulu-3). We use their open-released model checkpoints~\cite{chen2025parapo} and use the BookSum~\cite{kryscinski2021booksum} as the protected target set.
For differentially-private training, we fine-tune GPT-2 on Enron~\cite{klimt2004enron} with DP noise using various privacy loss budgets.

Our observations are: 
(1) a combination of these empirical training-phase defenses can amplify the extractability mitigation. For example, compared to the original Llama model, Tulu w/ ParaPO reduces the extractable portion given $b=32$ from $20\%$ to $0.4\%$.
(2) combining API access control with both empirical and theoretical defenses further suppresses extractability with minimal effort. For example, when API limits are paired with Tulu tuned with ParaPO only $0.03\%$ of samples are found to be  extractable.
(3) DP training can reduce the extraction ratio. However, varying $\epsilon$ does not lead to reductions across several orders of magnitude, especially for shorter sequences such as $l=10$ (see \Cref{APP:fig:defense} in the Appendix).
This reflects our \Cref{theo:dpd_de} and highlights the need for future work to better understand connections between training noise and effective extraction defense.

\section{Conclusion}

We proved formally and demonstrated empirically that extraction risk for API-served large language models is a separate security objective that is not captured by  classical distinguishability games underlying DP-based defenses and general inference privacy attacks. 
The reducibility chain often implicitly assumed does not hold from interpreting extraction risk with standard privacy tools; consequently, low MIA advantage or a tight DP budget does not necessarily ensure low extraction risk. 
To address this gap, we formalized a worst-case extraction game and introduced $(l,b)$-\emph{inextractability} that upper bounds the expected number of trials required for verbatim suffix regeneration. 
Our estimator accounts for prefix selection, optimal decoding configuration, and multiple trials, yielding a conservative per-sample bound that can be related to the magnitude of realistic attack resources (queries, tokens, dollars, and time). 
The framework provides a tight extraction risk estimation for greedy generation, upper bounds for probabilistic generation, and can be extended to untargeted goal and approximate criterion, enabling unified auditing.
Empirical analyses across models reveal that vulnerable profiles are distinct for MIA and data extraction and provide a guideline for understanding the extraction cost.

\section*{Acknowledgments}
\noindent 
We would like to thank reviewers for their constructive comments and efforts in improving our paper. 
This work was supported in part by National Science Foundation grants (CNS-2437345, CNS-2125530, CNS-2124104, IIS-2302968), National Institutes of Health grants (R01LM013712, R01ES033241), and funds
provided by the National Science Foundation, Department of Homeland Security, and IBM through the ACTION AI Institute (Award \#2229876).
Any opinions, findings, and conclusions or recommendations expressed in this material are those of the
authors and do not necessarily reflect the views
of the National Science Foundation or its federal
agency and industry partners.

\bibliographystyle{IEEEtran}
\bibliography{ref,bib/llm}

\section*{Ethics considerations}
\noindent
This work proposes a metric to quantify extraction risk in LLMs. All experiments are conducted on publicly available models and datasets; no private data or models are used, and no human subjects are involved. The paper fully discloses the measurement methodology. Potential risks are discussed at a conceptual level only, without revealing actionable attacks on deployed systems.

\section*{LLM usage considerations}
\noindent
\textit{Originality}: LLMs were used for editorial purposes in this manuscript, and all outputs were inspected by the authors to ensure accuracy and originality.
\textit{Transparency}: All LLMs evaluated in this work are publicly available open-source models. No ideas or methodological components were introduced by LLMs; all conceptual contributions originate from the authors.
\textit{Responsibility}: All experiments rely exclusively on publicly available datasets and open-source model releases. No additional data was collected, and no proprietary resources were used. We selected models with at most 8B parameters to run on NVIDIA A100/H100 GPUs and to minimize environmental and computational overhead. 
We reused existing publicly released checkpoints for transparency and reproducibility.

\appendices

\section{Additional Results}\label{APP:sec:add_results}

\shortsection{Estimation of local log-Lipschitz regularity}
We provide the local log-lipschitz regularity estimation in \Cref{APP:fig:lipschitz} for approximate distance metrics including cosine similarity of embedding, n-gram based distance ($1-$ROUGE-L), and edit distance, which covers existing metrics used for approximate extraction measurement~\cite{ippolito2022preventing,hayes2024measuring,kassem2024alpaca}.

\begin{figure*}
\centering
\includegraphics[width=0.3\linewidth]{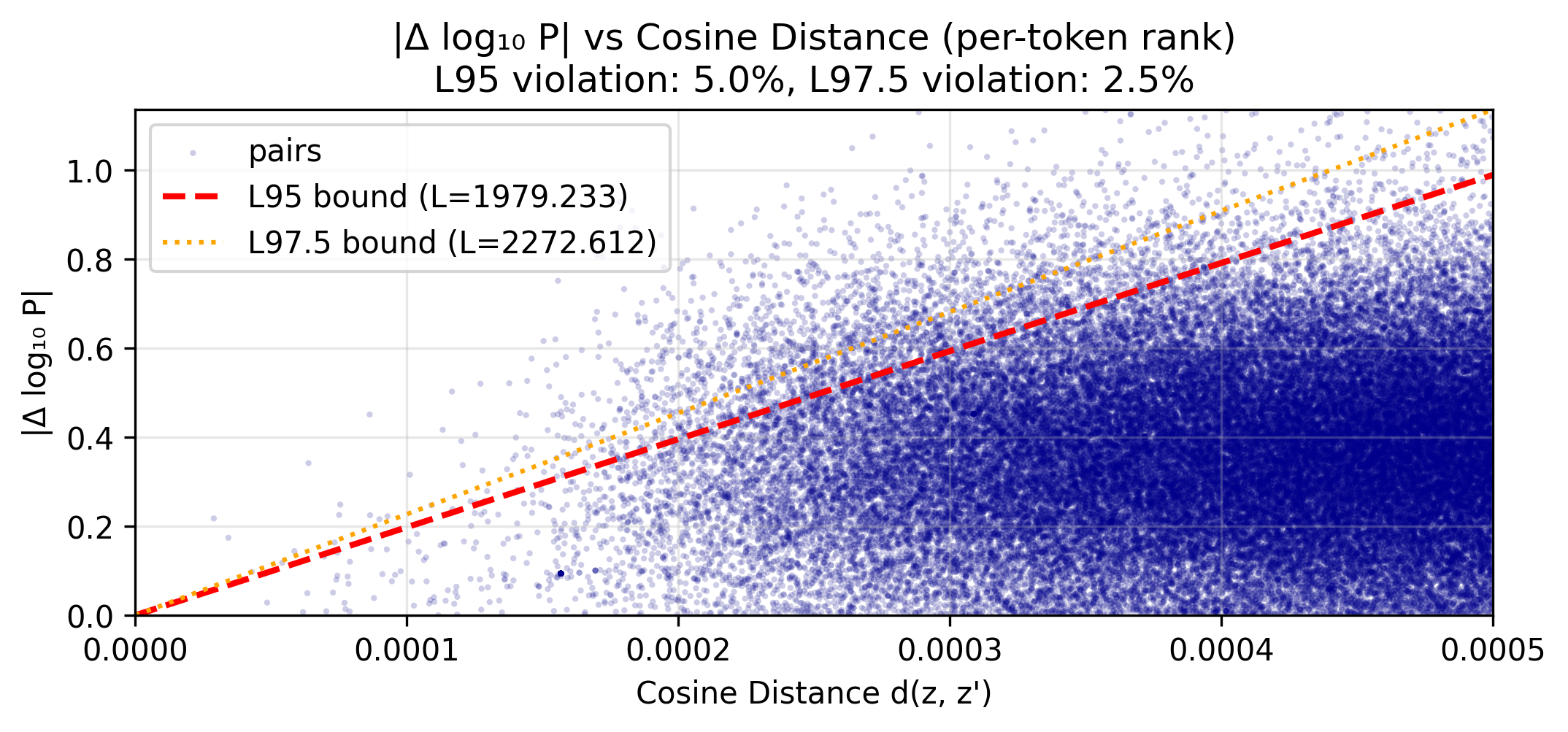}
\includegraphics[width=0.3\linewidth]{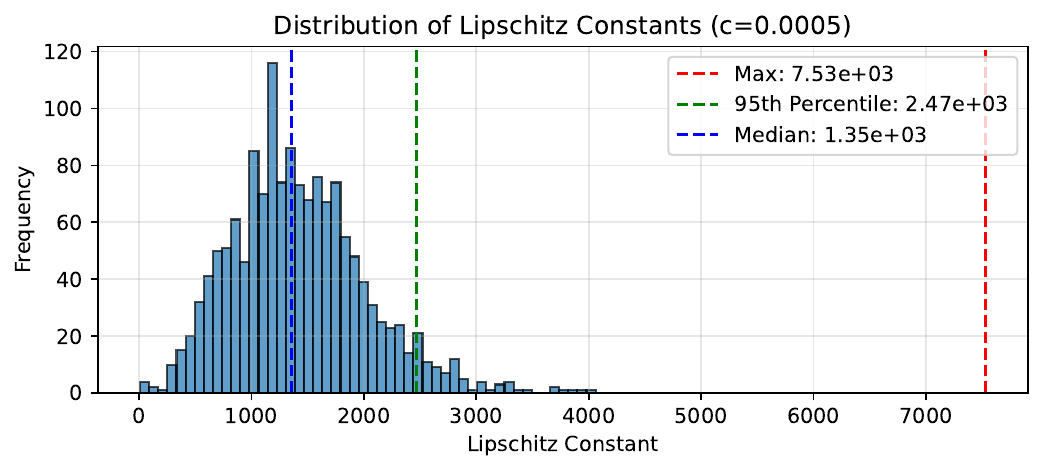}
\includegraphics[width=0.34\linewidth]{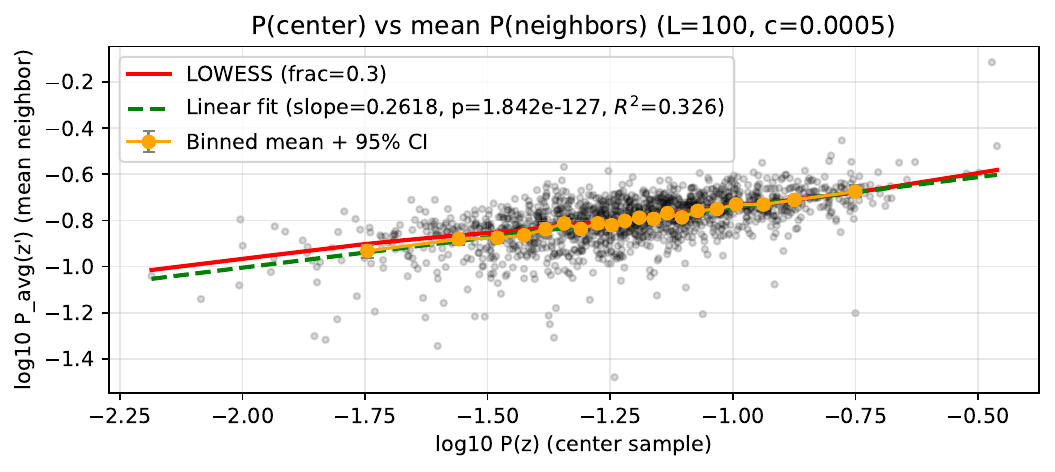}

\includegraphics[width=0.3\linewidth]{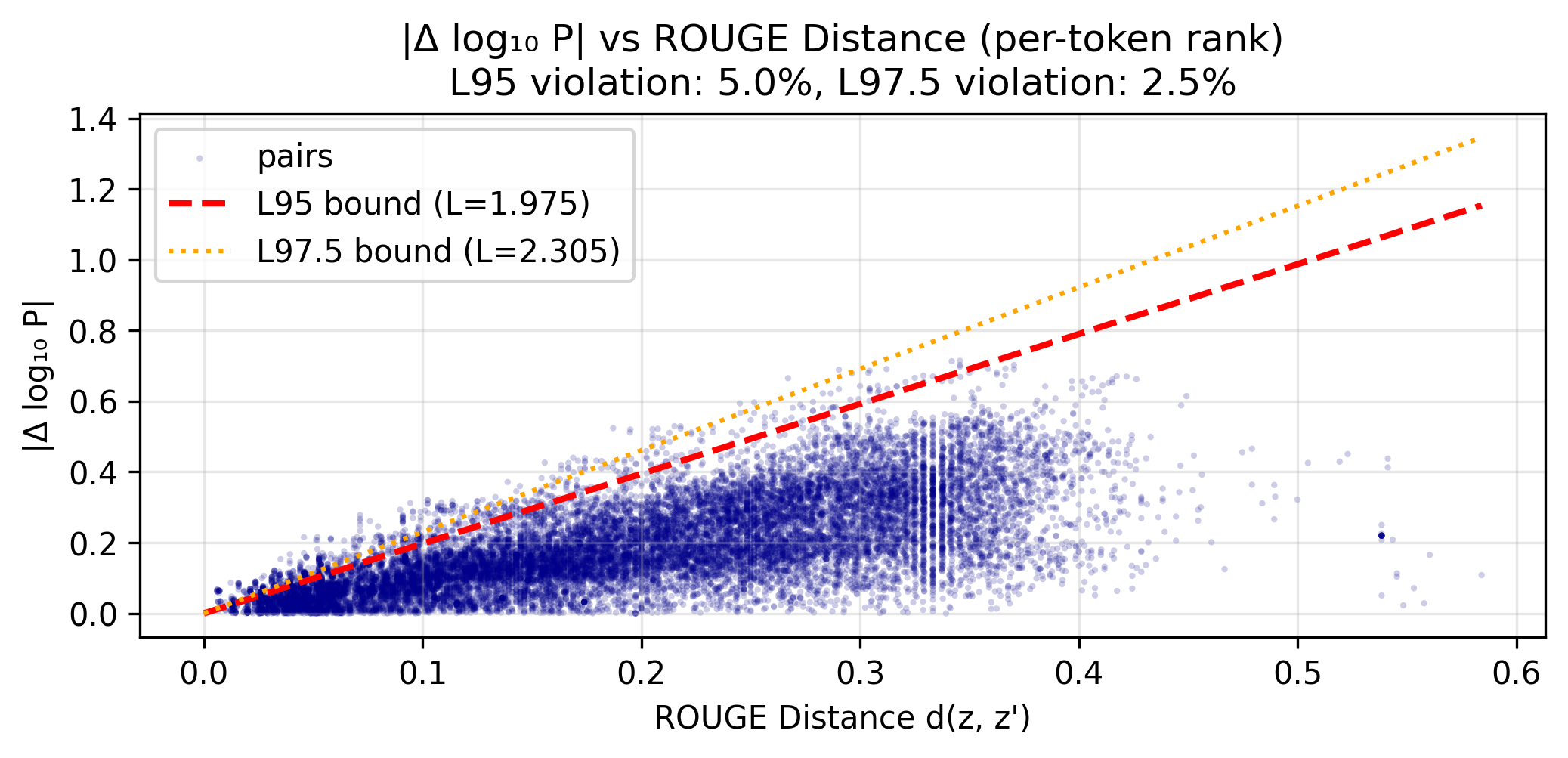}
\includegraphics[width=0.3\linewidth]{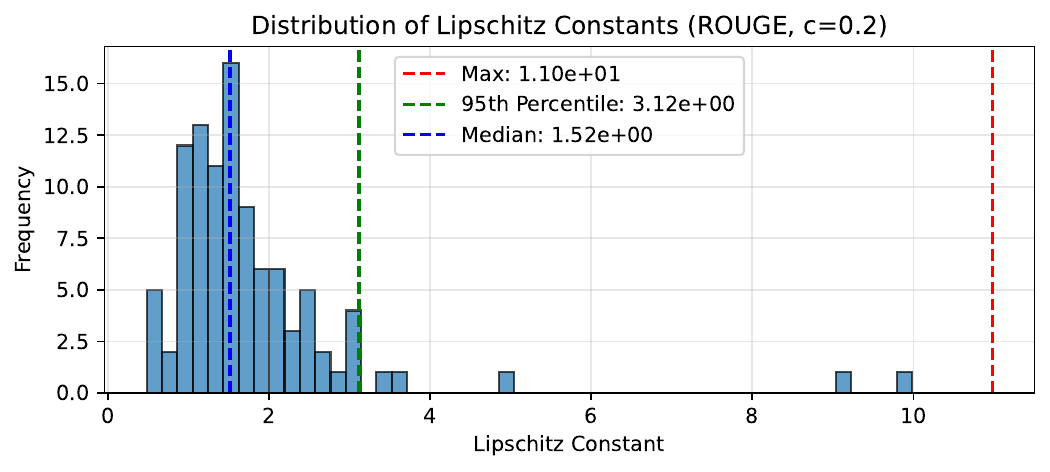}
\includegraphics[width=0.34\linewidth]{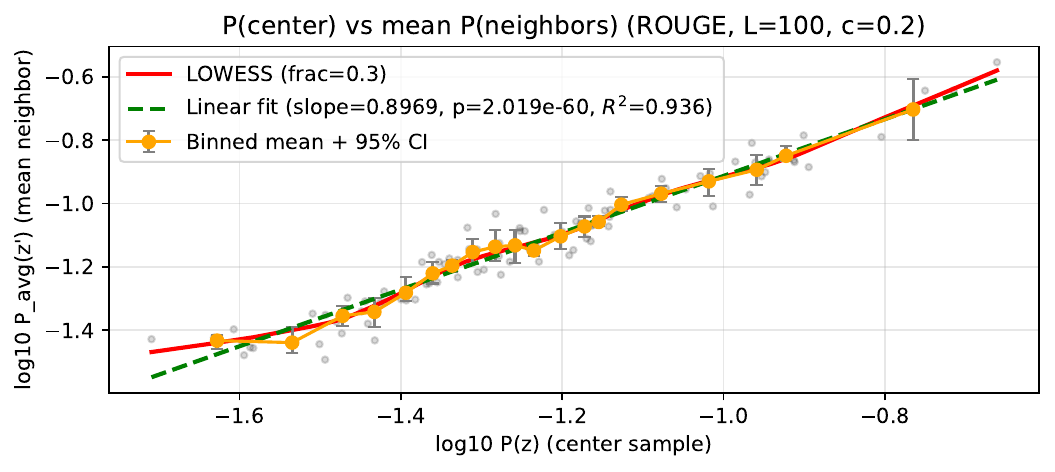}

\includegraphics[width=0.3\linewidth]{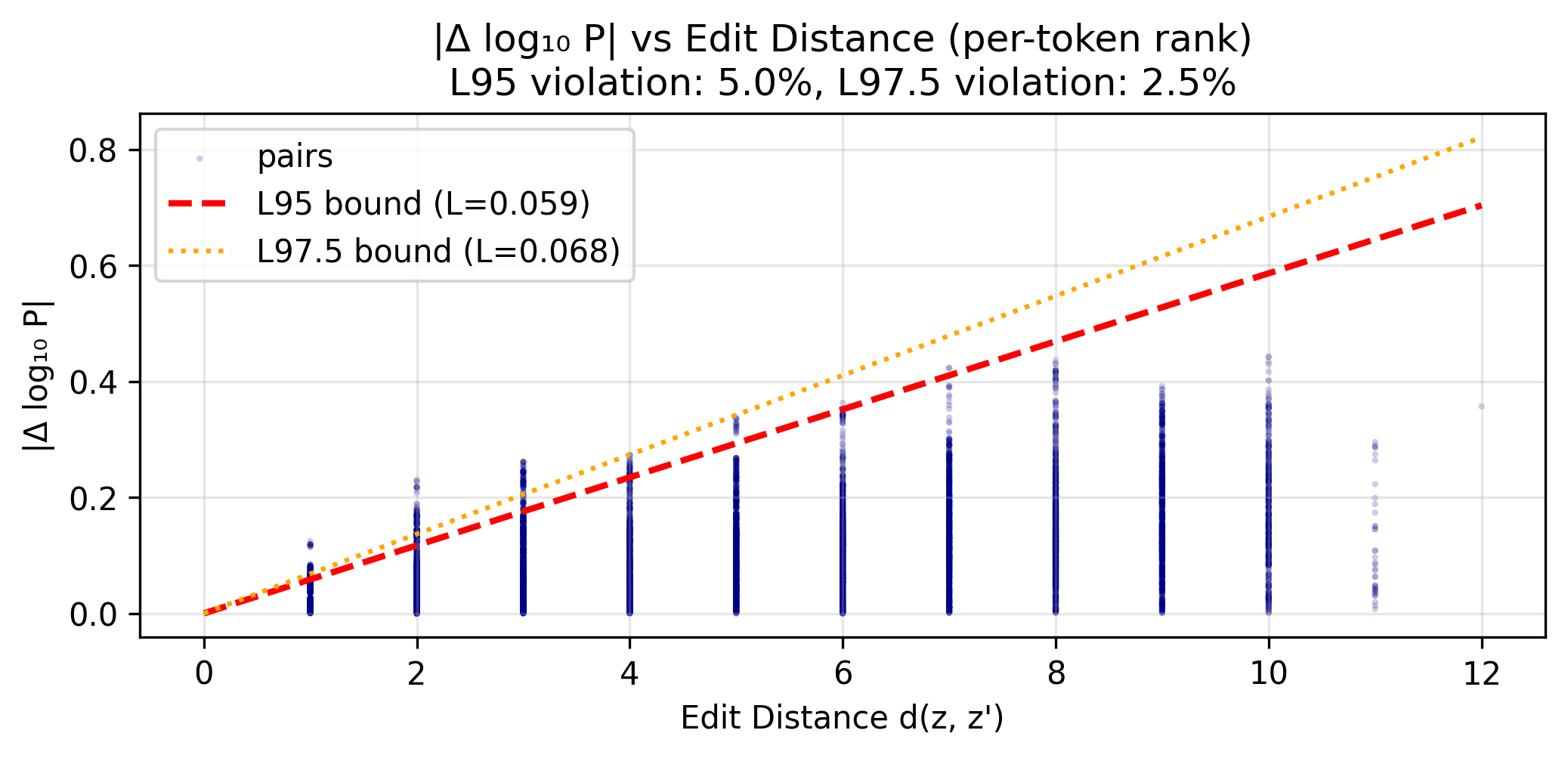}
\includegraphics[width=0.3\linewidth]{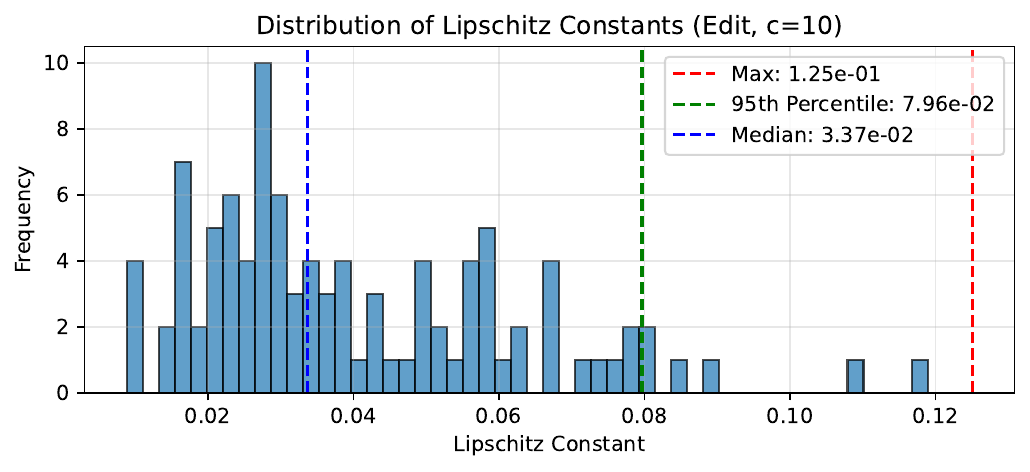}
\includegraphics[width=0.34\linewidth]{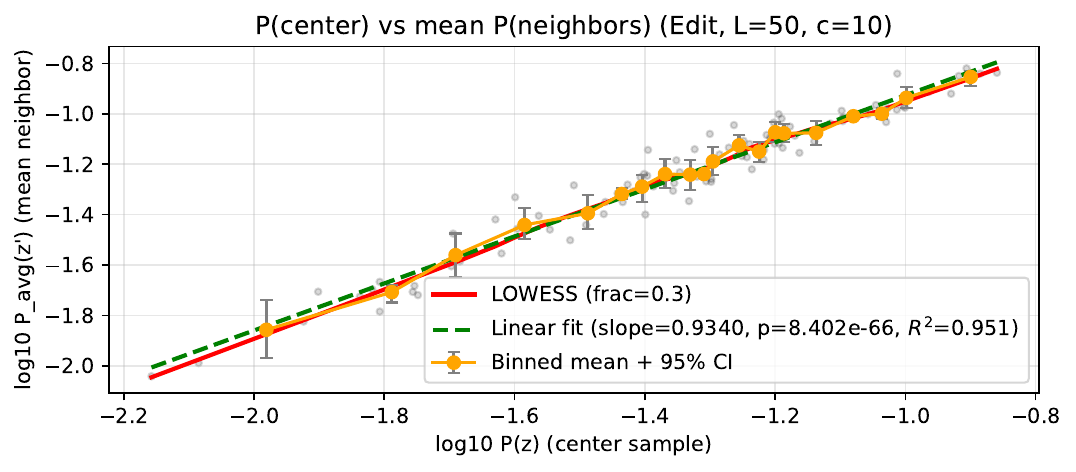}

\caption{Additional Empirical Evaluation of Local Log-Lipschitz Regularity on embedding cosine distance, 1-ROUGE-L, and edit distance.}
\label{APP:fig:lipschitz}
\end{figure*}

\shortsection{Correlation between text properties}
We analyze the spearman correlation between every two properties mentioned in \Cref{sec:mia_ext}, along with the MIA risk indicator of `Ref-Loss' and inextractability level $b$ in \Cref{APP:fig:heatmap_properties}.

\shortsection{Excluding MIA strength as a confounder}
\Cref{APP:fig:auc_correlation} supports \Cref{fig:mia_ext_corr} by showing that the correlation between each sample's indistinguishability and inextractability does not depend on the MIA strength for a specific MIA signal.

\begin{figure}
\centering
\includegraphics[width=0.8\linewidth]{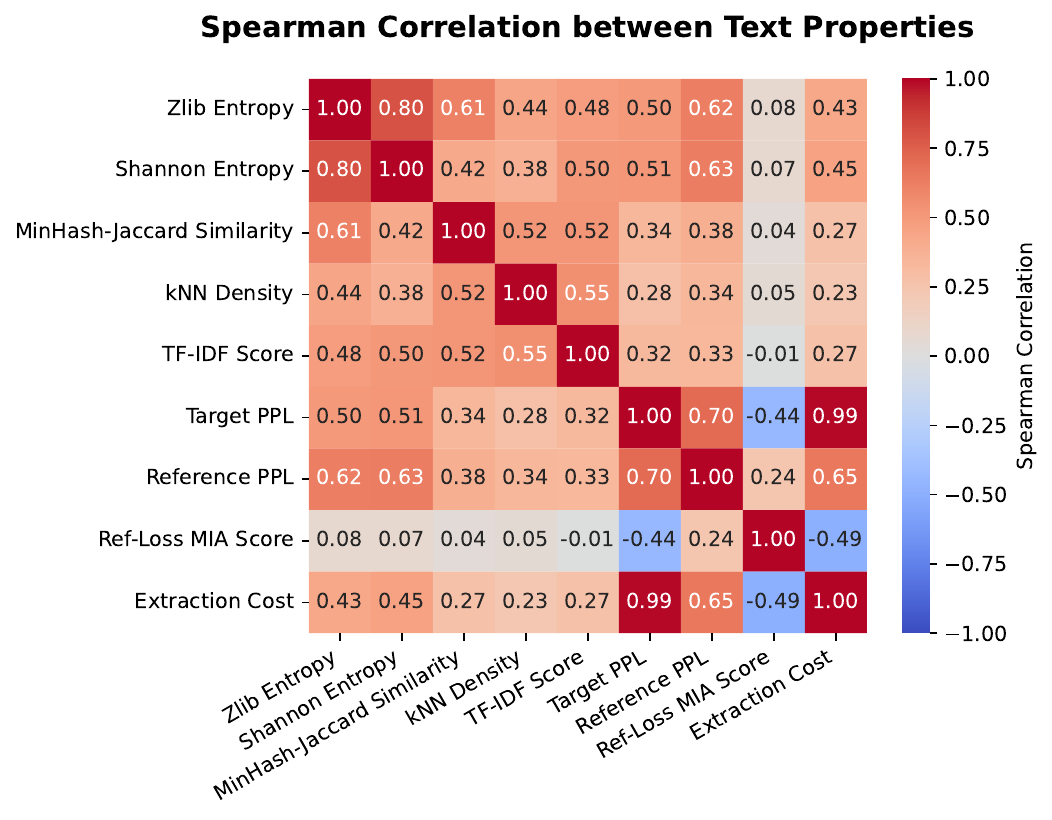}
\caption{Correlation between text property metrics.}
\label{APP:fig:heatmap_properties}
\end{figure}

\shortsection{Evaluation on different data domain}
\Cref{APP:fig:domain} evaluates the extraction cost for different domains in pre-training data with $l=50$.
We find that the code and book domain have the higher extractable rate than other domains, demonstrating the potential risks of copyright infringement.

\begin{figure}
\centering
\includegraphics[width=0.8\linewidth]{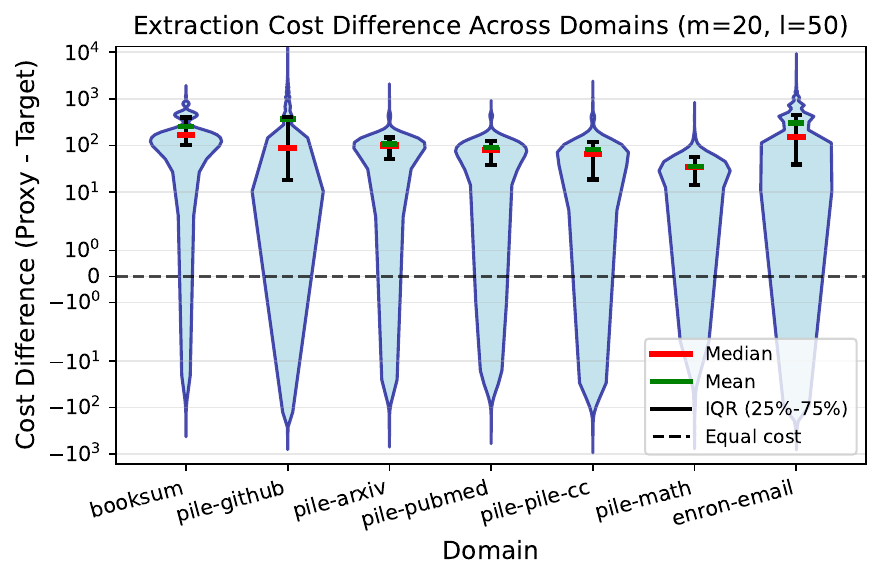}
\caption{Extraction cost across pre-training data domains, where the target model is Llama-3.1-8B and the proxy model is GPT-2.}
\label{APP:fig:domain}
\end{figure}

\begin{figure}
    \centering
    \includegraphics[width=0.8\linewidth]{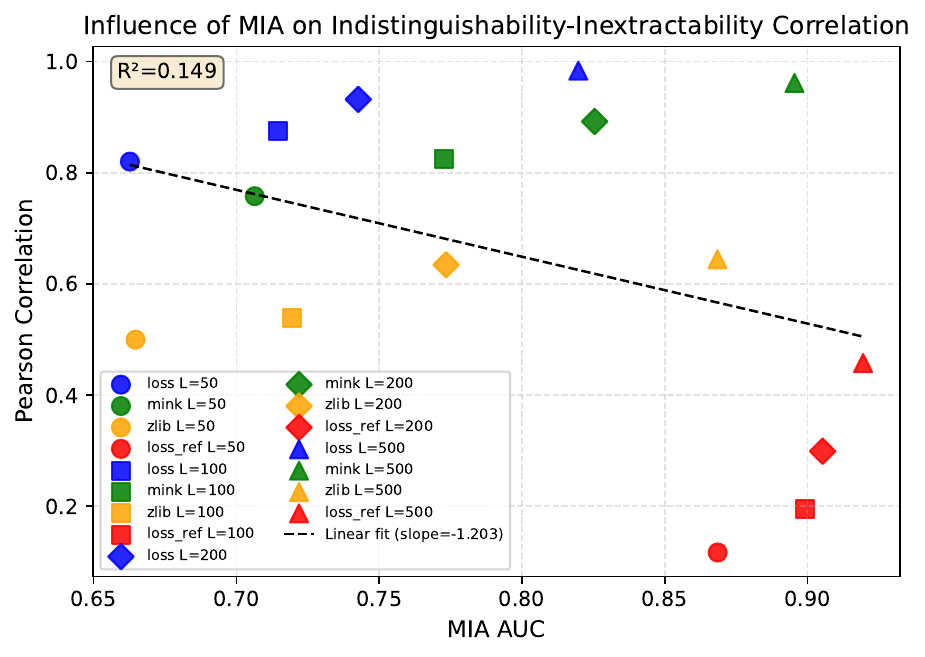}
    \caption{The correlation between indistinguishability and inextractability for each sample depends on whether the MIA signal is calibrated as shown in \Cref{fig:mia_ext_corr}, instead of the capability of the MIA signal (MIA AUC).
    For example, `Loss-Ref' has the highest MIA AUC and has the lowest correlation; `Zlib' has lower MIA AUC than `MinK' but also has lower correlation than `MinK'.
    }
    \label{APP:fig:auc_correlation}
\end{figure}

\begin{figure*}[htb]
 	\centering
 	\includegraphics[width=\linewidth]{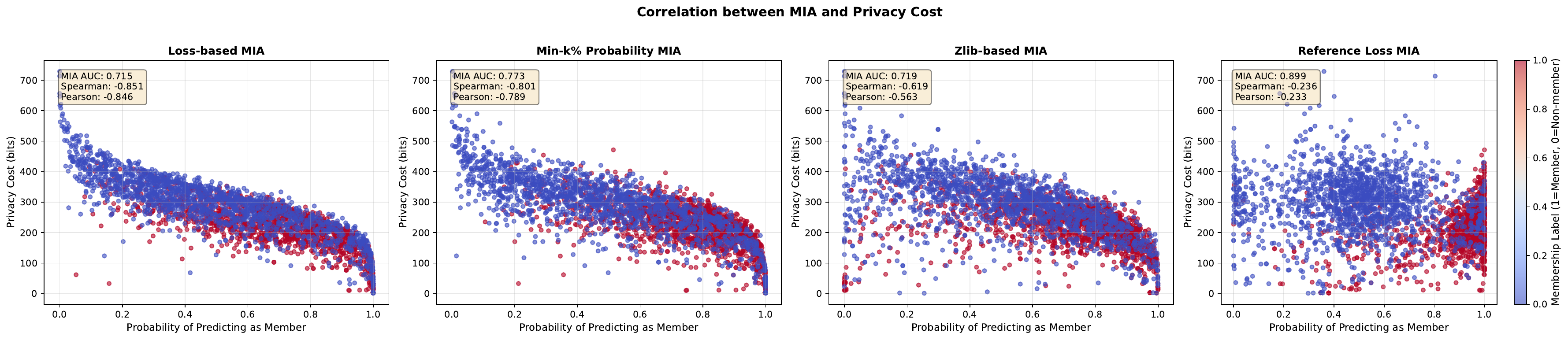}
 	\caption{Correlation of extraction cost and MIAs with various signals of GPT-2 fine-tuned on Enron dataset.}
 	\label{APP:fig:mia_ext}
\end{figure*}

\shortsection{Extraction evaluation with defenses}
We demonstrate the shorter sequence setting for DP defense evaluation in \Cref{APP:fig:defense}, which has significantly higher extraction rate than longer sequences in \Cref{fig:defense} under the same level of defense.
While this is not surprising given the probability production, the ineffectiveness of smaller $\epsilon$ is also more obvious, highlighting the incapability of distinguishability defense on extraction attack, especially for shorter text.

\begin{figure}
\centering
\includegraphics[width=\linewidth]{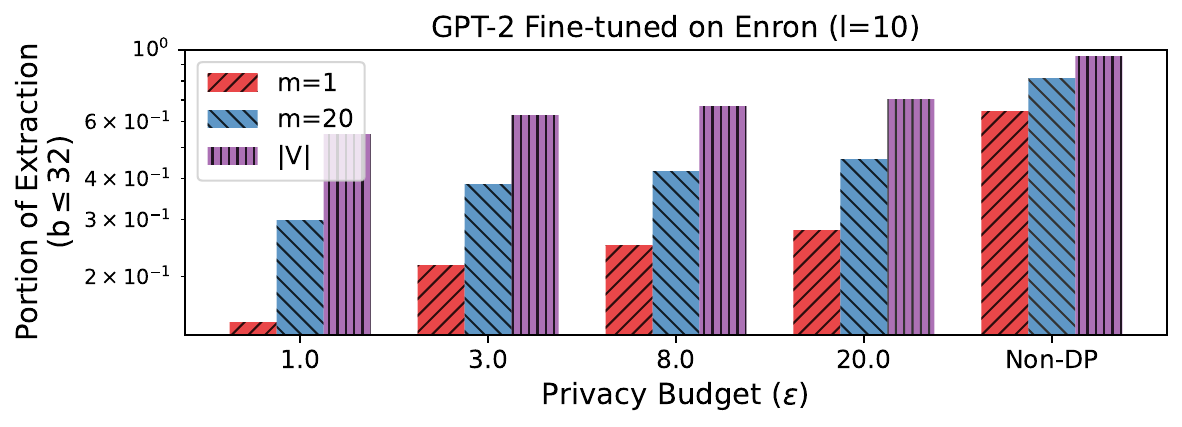}
\caption{Evaluation of DP training with $l=10$.}
\label{APP:fig:defense}
\end{figure}

\section{Proofs and Additional Analysis}\label{sec:appendix_proofs}
\subsection{Proof of Lemma~\ref{lemma:independent_optimal}}
\textit{Proof.}
By definition of $p_{\mathbf{z}}^*$, for every single-attempt strategy the success probability is bounded by $p_{\mathbf{z}}^*$. For any $n$-attempt strategy (possibly adaptive) let $q_j$ denote the conditional success probability of the $j$-th attempt given the history of previous attempts; then $q_j \le p_{\mathbf{z}}^*$ for all $j$. The overall success probability of the $n$-attempt strategy equals
$$
1-\prod_{j=1}^n (1-q_j) \le 1-\prod_{j=1}^n (1-p_{\mathbf{z}}^*) = 1-(1-p_{\mathbf{z}}^*)^n,
$$
which yields the claimed upper bound. The bound is tight, and can be achieved by repeating an optimal single-attempt strategy independently $n$ times. %

\subsection{Proof of Theorem~\ref{thm:targeted_to_untargeted}}
\textit{Proof.}
Let $\mathcal{S}=\{\mathbf{z}\in D_{\text{pro}}:|\mathbf{z}|=l\}$ with $|\mathcal{S}|=M$. By $(l,b)$-extraction privacy (Definition~\ref{def:ext_bit}), for every $\mathbf{z}\in\mathcal{S}$ the optimal single-attempt extraction probability satisfies $p_{\mathbf{z}}^*\le 2^{-b}$. The event of untargeted success is the union $\bigcup_{\mathbf{z}\in\mathcal{S}}\{\text{emit }\mathbf{z}\}$. Applying the union bound,
$$
\Pr\big[\exists \mathbf{z}\in\mathcal{S}:\text{ emit }\mathbf{z}\big] \le \sum_{\mathbf{z}\in\mathcal{S}} p_{\mathbf{z}}^* \le M 2^{-b}.
$$
Since probabilities cannot exceed $1$, we obtain $\Pr[\exists \mathbf{z}] \le \min\{1, M2^{-b}\}$. Defining $b_{\mathrm{un}} := -\log_2 \Pr[\exists \mathbf{z}\in\mathcal{S}:\text{ emit }\mathbf{z}]$ and using $-\log_2(\min\{1,x\}) = \max\{0,-\log_2 x\}$ for $x>0$ gives
$$
b_{\mathrm{un}} \ge \max\{0, b - \log_2 M\}.
$$
\noindent\textbf{Independence refinement.} If, in addition, the per-$\mathbf{z}$ emission events are (conditionally) independent, then
$$
\Pr[\exists \mathbf{z}\in\mathcal{S}] = 1-\prod_{\mathbf{z}\in\mathcal{S}} (1-p_{\mathbf{z}}^*) \le 1-(1-2^{-b})^{M},
$$
yielding the tighter bound $b_{\mathrm{un}} \ge -\log_2\big(1-(1-2^{-b})^{M}\big) \ge b-\log_2 M$ when $M2^{-b}\ll 1$. This completes the proof.

\subsection{Proof of Theorem~\ref{thm:worst_case_extraction_risk}}\label{app:worst_case_extraction_risk}
\textit{Proof.} Consider any position $t$ with ground-truth token $z_t$ of rank $r_t$ in the descending order of the original (temperature $T=1$) distribution $\mathbf{p}_t$. Under a sampling strategy with truncation size $k$ and temperature $T$, the probability of sampling $z_t$ is
$$
\Pr[\hat{z}_t \mid k,T] = \begin{cases} \dfrac{\exp(s_{z_t}/T)}{\sum_{v\in S_k} \exp(s_v/T)} & z_t\in S_k, \\ 0 & z_t\notin S_k, \end{cases}
$$
where $S_k$ is the set of top-$k$ tokens after scaling. For any fixed $k\ge r_t$, by letting $T\to 0$ the distribution mass concentrates on the top-$r_t$ logits, making the (approximately) uniform distribution over those $r_t$ tokens the extremal case. Thus the maximal achievable sampling probability for $z_t$ does not exceed $1/r_t$, and this bound is (asymptotically) attained by choosing $k=r_t$ and letting $T\to 0$. Hence
$$
\max_{k,T} \Pr[\hat{z}_t \mid k,T] = \frac{1}{r_t}.
$$
For a suffix $\vz_s$ the optimal single-attempt strategy (teacher-forced context, optimal truncation and temperature per position) yields an upper bound given by the product of per-token maxima due to the chain rule,
$$
\max_{k,T} \Pr\big[\mathcal{O}_{f_\theta \circ\phi_{k,T}}(\vz_p) = \vz_s\big] \le \prod_{t=1}^{l} \frac{1}{r_t}.
$$
Repeating the optimal single-attempt strategy independently $n$ times, the probability of at least one success is bounded by the standard complement form
$$
1-\Big(1-\prod_{t=1}^{l} \frac{1}{r_t}\Big)^n,
$$
which yields the stated worst-case extraction risk bound.

\subsection{Approximate Neighborhood Analysis}\label{app:approx_neighbor}
\begin{lemma}[Neighborhood Multiplicative Concentration]\label{lem:neigh_concentration}
Under Assumption~\ref{ass:lipschitz}, for every $z'\in\mathcal{N}(z)$ we have
$$
e^{-L_0 c} P_v(z) \le P_v(z') \le e^{L_0 c} P_v(z).
$$
Letting $\mu := |\mathcal{N}(z)|^{-1} \sum_{z'\in\mathcal{N}(z)} P_v(z')$ yields
$$
e^{-L_0 c} \le \frac{\mu}{P_v(z)} \le e^{L_0 c}, \\
\Big|\frac{\mu}{P_v(z)}-1\Big| \le e^{L_0 c}-1 \approx L_0 c.
$$
\end{lemma}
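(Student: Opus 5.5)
The plan is to exponentiate the pointwise log-Lipschitz inequality of Assumption~\ref{ass:lipschitz} and then average. The one subtlety is that the assumption is stated for pairs $x,y\in\mathcal{N}(z)$, so we must place the center itself in the neighborhood. This is the step I expect to need the most care. Since $d(z,z)=0\le c$, we have $z\in\mathcal{N}(z)$. For any $z'\in\mathcal{N}(z)$ we also have $d(z,z')\le c$ by definition of the closed neighborhood. Here I take the symmetry of $d$ for granted, or equivalently read the neighborhood condition in the same argument order as the assumption; for a non-symmetric dissimilarity I would state this as the convention.

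With $x=z$ and $y=z'$, Assumption~\ref{ass:lipschitz} gives
\[
|\log P_v(z')-\log P_v(z)|\le L_0\,d(z,z')\le L_0 c .
\]
Because $P_v(z)>0$ (we work in log space, so the logs are finite) and $\exp$ is monotone, this yields
\[
e^{-L_0c}P_v(z)\le P_v(z')\le e^{L_0c}P_v(z),
\]
which is the first claim.

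For the averaged statement, I would sum the two-sided bound over $z'\in\mathcal{N}(z)$ and divide by $|\mathcal{N}(z)|$, assuming the neighborhood is finite and nonempty; it is nonempty since $z\in\mathcal{N}(z)$. Since the bounds are constant in $z'$, averaging preserves them: $e^{-L_0c}P_v(z)\le\mu\le e^{L_0c}P_v(z)$. Dividing by $P_v(z)$ gives the ratio bound. If $\mathcal{N}(z)$ were infinite, I would replace the average by an expectation under any probability measure on $\mathcal{N}(z)$, and the same argument goes through.

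Finally, subtracting $1$ gives $e^{-L_0c}-1\le \mu/P_v(z)-1\le e^{L_0c}-1$. Using $1-e^{-x}\le e^{x}-1$ for $x\ge0$, which follows from $e^x+e^{-x}\ge2$, the absolute value is at most $e^{L_0c}-1$. The approximation $e^{L_0c}-1\approx L_0c$ is the first-order Taylor expansion, valid for small $L_0c$, with error $O((L_0c)^2)$.
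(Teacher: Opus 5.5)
Your proposal is correct and follows the paper's proof: apply Assumption~\ref{ass:lipschitz} to the pair $(z,z')$, exponentiate, and average the constant bounds over $\mathcal{N}(z)$. Your extra details fill in steps the paper leaves implicit, namely that $z\in\mathcal{N}(z)$ since $d(z,z)=0$, and the inequality $1-e^{-x}\le e^{x}-1$ that justifies the absolute-value bound; the symmetry caveat is unnecessary, because taking $x=z$, $y=z'$ already yields $d(z,z')$ in the same argument order as the definition of $\mathcal{N}(z)$.
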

\textit{Proof of Lemma~\ref{lem:neigh_concentration}.} Assumption~\ref{ass:lipschitz} gives $|\log P_v(z')-\log P_v(z)|\le L_0 d(z,z')\le L_0 c$, hence $\log P_v(z')\in[\log P_v(z)-L_0 c,\log P_v(z)+L_0 c]$ and the stated bounds. Averaging preserves inequalities, giving the mean concentration.
The approximation at the end assumes a small $L_0 c$, which is practical given our observation in \Cref{fig:beyond_prob_change}.

\textit{Proof of Corollary~\ref{cor:ratio_suppression}.}
Let pre/post log-Lipschitz constants be $L_0$ and $L_0'$ over the same radius $c$. 
From Lemma~\ref{lem:neigh_concentration} (pre) and its post-defense analogue, the log-Lipschitz assumption gives
$$
\mu \ge e^{-L_0 c}P_v(z), \qquad \mu' \le e^{L_0' c}P'_v(z).
$$
The defense goal of suppression $\mu'<\mu$ requires $e^{L_0' c}P'_v(z) < e^{-L_0 c}P_v(z)$, i.e.\ $P_v(z)/P'_v(z) > e^{(L_0+L_0')c}$. Substituting $P'_v(z)=P_v(z)2^{-\Delta_b}$ yields $2^{\Delta_b}>e^{(L_0+L_0')c}$, hence $\Delta_b > (L_0+L_0')c/\ln 2$.

\shortsection{Estimating the log-Lipschitz constants $L_0,L_0'$}
Given a target text $z$, sample $K$ neighbors $\{z'_k\}_{k=1}^K$ satisfying $d(z,z'_k)\le c$ and compute per-sample ratios
$$
L_k \;=\; \frac{|\log P_v(z'_k) - \log P_v(z)|}{d(z,z'_k)}.
$$
We set $L_0$ to the 95th percentile of $\{L_k\}$, i.e.\ the smallest slope such that the line $y=L_0\,x$ through the origin covers 95\% of the scatter of $(d(z,z'_k),\;|\log P_v(z'_k)-\log P_v(z)|)$.
This avoids the conservatism of a global maximum while providing a high-confidence bound.
The post-defense constant $L_0'$ can be  obtained identically using $P'_v$.

\end{document}